\DeclareMathAlphabet{\mathcal}{OMS}{cmsy}{m}{n}
\newcommand{\rpm}{\sbox0{$1$}\sbox2{$\scriptstyle\pm$}
  \raise\dimexpr(\ht0-\ht2)/2\relax\box2 }
\newtheorem{theorem}{Theorem}
\newtheorem{lemma}[theorem]{Lemma}
\newtheorem{proposition}{Proposition}
\newcommand{\Rel}{\operatorname{Re}}
\newcommand{\Img}{\operatorname{Im}}
\newcommand{\fb}{{f_b}}
\newcommand{\fc}{{f}}
\newcommand{\Ahat}{\hat{\bm{A}}}
\newcommand{\Atilde}{\tilde{\bm{A}}}
\newcommand{\bmtheta}{\bm{\theta}}
\newcommand{\bmthetao}{\bm{\theta}_0}
\newcommand{\thetahat}{\hat{\bmtheta}}
\newcommand{\residuals}{ \hat{\bm{\epsilon}}}
\newcommand{\errors}{\bm{\epsilon}}
\newcommand{\argmin}{\mathop{\rm arg\,min}}
\newcommand{\inner}[1]{\left\langle #1 \right\rangle}
\newcommand{\inP}{ \stackrel{\mathbb{P}}{\rightarrow}}
\begin{document}


\title{On Validating Angular Power Spectral Models for the Stochastic Gravitational-Wave Background Without Distributional Assumptions}
\thanks{This paper has been submitted in conjunction with its companion \citet{PRL}}%
\author{Xiangyu Zhang\,\orcidlink{0000-0003-3684-0370}}
\affiliation{School of Statistics, University of Minnesota, Minneapolis, MN,  USA}
\author{Erik Floden\,\orcidlink{0000-0002-1701-7461}}
\affiliation{School of Physics and Astronomy, University of Minnesota, Minneapolis, MN,  USA}
\author{Hongru Zhao\,\orcidlink{0000-0002-1753-4709}}
\affiliation{School of Statistics, University of Minnesota, Minneapolis, MN,  USA}
\author{Sara Algeri\,\orcidlink{0000-0001-7366-3866}}
\email{salgeri@umn.edu}
\affiliation{School of Statistics, University of Minnesota, Minneapolis, MN,  USA}
\author{Galin Jones\,\orcidlink{0000-0002-6869-6855}}
\affiliation{School of Statistics, University of Minnesota, Minneapolis, MN,  USA}
\author{Vuk Mandic\,\orcidlink{0000-0001-6333-8621}}
\affiliation{School of Physics and Astronomy, University of Minnesota, Minneapolis, MN,  USA}
\author{Jesse Miller\,\orcidlink{0009-0005-9465-7461}}
\affiliation{School of Statistics, University of Minnesota, Minneapolis, MN,  USA}
\date{\today}
\begin{abstract}
It is demonstrated that estimators of the angular power spectrum commonly used for the stochastic gravitational-wave background (SGWB) lack a closed-form analytical expression for the likelihood function and, typically, cannot be accurately approximated by a Gaussian likelihood. Nevertheless, a robust statistical analysis can be performed by extending the framework outlined in \cite{PRL} to enable the estimation and testing of angular power spectral models for the SGWB without specifying distributional assumptions. Here, the technical aspects of the method are discussed in detail. Moreover, a new, consistent estimator for the covariance of the angular power spectrum is derived. The proposed approach is applied to data from the third observing run (O3) of Advanced LIGO and Advanced Virgo.

\end{abstract}


\maketitle

\section{\label{sec:1}Introduction}
Estimating and testing models for the angular power spectrum is required in many astrophysical and cosmological settings, including the cosmic microwave background \cite{planck2020}, galaxy sky surveys \cite{SDSS_DR16}, and weak gravitational lensing surveys \cite{DESY3LensingMap}. The focus here is on the stochastic gravitational-wave background (SGWB) resulting from the superposition of gravitational-wave (GW) signals that are too weak or too numerous to be resolved individually. While the SGWB and its anisotropies have not yet been detected by second-generation GW detectors such as the Advanced LIGO \cite{Aasi_2015} and Advanced Virgo \cite{Acernese_2015}, upper limits have been placed on the GW energy density of the background as well as its angular power spectra \cite{O3stochdir, o3stochiso, o4isotropic, o4aniso}.

Anisotropies in the SGWB are expected to arise from a variety of cosmological and astrophysical phenomena \cite{Contaldi_2017, jenkins2019, Jenkins_2019_2, Pitrou_2020}. They can provide information on the astrophysical distribution of GW sources within galaxies, formation of large-scale structure in the universe, and different GW production mechanisms \cite{Cusin:2017fwz, Cusin:2017mjm, Cusin_2018_2, Cusin:2019jpv}. GWs generated by cosmological phase transitions necessarily have anisotropies analogous to those of the cosmic microwave background and may provide information regarding primordial inhomogeneities during an early inflationary era \cite{Geller, Bertacca_2020}. GW signals emitted by neutron stars within our galaxy can trace out the galactic plane \cite{Talukder_2014, Mazumder_2014}, while cosmic strings may cause anisotropies at small angular scales \cite{Jenkins_2018}.

Theoretical models of SGWB anisotropy typically do not predict the realization of the anisotropy on the sky; instead, they predict the distribution of the SGWB power across different angular scales described by the SGWB angular power spectrum. It is, therefore, critical to measure the SGWB angular power spectrum in order to establish a comparison with theoretical models. In practice, one starts by cross-correlating the strain time-series data measured by two GW detectors located at two different locations on Earth. Under the assumption of a stationary SGWB, one can then leverage the rotating antenna pattern of the detector pair (induced by the rotation of the Earth) to estimate the SGWB anisotropy across the sky from the GW data~\cite{allenromano, Thrane:2009fp, O3stochdir}. Expanding the estimated anisotropy in terms of spherical harmonics, and then squaring and summing the expansion coefficients, results in estimators of the SGWB angular power spectrum (see Sec. \ref{sec:GW}). These estimators can then be directly compared with their theoretical counterparts.

Estimators of the SGWB anisotropy, such as coefficients in the spherical harmonic expansion, are obtained by averaging over many time segments of the strain data and are, therefore, asymptotically multivariate Gaussian distributed. This feature has been used in past parameter estimation analyses whose objective was to estimate individual spherical harmonic modes \cite{Tsukada_2023} or to estimate a specific anisotropy realization on the sky, such as the galactic plane \cite{Agarwal_MilkyWay}. Estimators of the angular power spectrum are obtained by squaring the SGWB anisotropy estimators; hence, they are not Gaussian distributed. Nevertheless, some studies have assumed Gaussianity of the angular power spectrum \cite{Agarwal_2023}, but such an assumption has been shown to lead to biases.



As shown in Appendix~\ref{generalized_chi}, the estimators of the angular power spectrum follow a generalized $\chi^2$ distribution, which cannot be accurately approximated by the Gaussian distribution. 
Moreover, using the generalized $\chi^2$ likelihood is nontrivial because, in general, 
{\color{black} the corresponding probability density function} does not have a closed-form expression. Alternatively, one could devise a scheme that does not require any assumptions regarding the angular power spectrum's underlying distribution. The latter approach is used here to implement and adapt the statistical procedure outlined in our companion paper \cite{PRL} to the context of the SGWB.

The remainder of the manuscript is organized as follows. Sec. \ref{sec:GW} describes the search methods used to estimate the angular power spectra from GW detector data. Sec. \ref{sec:III} presents a comprehensive explanation of how the distribution-free statistical framework outlined in our companion paper \cite{PRL} can be tailored here to the specifics of an SGWB analysis. Sec. \ref{simulations} investigates the statistical properties of the proposed procedure via simulation studies. In Sec. \ref{sec:application}, the method is applied to GW data from Advanced LIGO and Advanced Virgo’s third observing run (O3), comparing it with a theoretical model for the angular power spectrum arising from the superposition of compact binary coalescences \cite{Cusin:2017mjm, Cusin:2019jpv, yang2023}. Our results demonstrate that this framework can reliably estimate the unknown parameters of the hypothesized models for the angular power spectrum and provides a powerful and computationally efficient goodness-of-fit test to assess the validity of such models.

\section{\label{sec:GW} Spherical Harmonics Search Methods for the SGWB}

\textcolor{black}{Data from ground-based interferometric gravitational-wave} detectors may be used to generate estimators of the SGWB angular power spectrum. Each GW detector, labeled by $I = 1,2$, produces a strain time series given by
\begin{align}
s_{I}(t) = h_{I}(t) + n_{I}(t),
\end{align}
with $n_I(t)$ denoting the detector noise and $h_I(t)$ is the detector response to a GW signal. The GW signal can be decomposed in terms of GW modes $h_A(f,\hat{\Omega})$ defined by their polarization $A$, frequency $f$, and propagation direction $\hat{\Omega}$:
\begin{align}
h_I(t) = \int^{\infty}_{-\infty} \mathrm{d}f \int_{S^2} \mathrm{d}\hat{\Omega}h_A(f,\hat{\Omega})F_I^A(\hat{\Omega},t)e^{i2\pi f(t-\hat{\Omega}\cdot \vec{x}_I(t)/c)}.
\end{align}
Here, $F_I^A(\hat{\Omega},t)$ is the time-dependent detector response function, $\vec{x}_I(t)$ gives the time-dependent location of interferometer $I$, and $S^2$ denotes integration over the 2-sphere. If $\hat{X}_I(t)$ and $\hat{Y}_I(t)$ are unit vectors pointing along the arms of detector $I$, the detector response function is
\begin{align}\label{eq:response_function}
F_I^A(\hat{\Omega},t) = \frac{1}{2}\left[\hat{X}^a_I(t)\hat{X}^b_I(t) - \hat{Y}^a_I(t)\hat{Y}^b_I(t)\right] e^{A}_{ab}(\hat{\Omega}),
\end{align}
in which we assume Einstein summation over the spatial indices $a,b$. The gravitational-wave polarization tensors $e^{A}_{ab}(\hat{\Omega})$ are defined as in \cite{Thrane:2009fp}.
\textcolor{black}{Note the absence of frequency dependence in Eq.~\eqref{eq:response_function} holds only when working in the small antenna limit \cite{RomanoCornish}, a limit which we assume throughout this work.} We assume the SGWB is unpolarized, Gaussian, and stationary. Additionally, by assuming perturbations to the spacetime metric are real, the Fourier coefficients $h_A(f,\hat{\Omega})$ satisfy $h_A(-f,\hat{\Omega}) = h_A^*(f,\hat{\Omega})$, where $^*$ denotes complex conjugation. \textcolor{black}{Without loss of generality, we also assume they} have zero means, $\langle h_A(f,\hat{\Omega})\rangle = 0$ \textcolor{black}{\cite{RomanoCornish}}, and the expected value of their two-point correlation is 
\begin{align}\label{eq:2point}
\langle h_{A}^{*} (f,\hat\Omega) \, h_{A'} (f',\hat\Omega')\rangle = \frac{1}{4}\, \mathcal{P}(f,\hat\Omega)\,\delta_{AA'}\,\delta(f-f')\,\delta^2(\hat\Omega,\hat\Omega'),\,
\end{align}
where ${\mathcal{P}}(f,\hat{\Omega})$ gives the spectral and angular distribution of the SGWB \textcolor{black}{power} \cite{Thrane:2009fp}, $\delta_{AA'}$ denotes the Kronecker delta function in polarization, and the remaining $\delta$'s denote Dirac delta functions in frequency and angular position on the sky. \textcolor{black}{The factor of $\frac{1}{4}$ in Eq.~\eqref{eq:2point} is comprised of one factor of $\frac{1}{2}$ which accounts for our considering only the one-sided ($f>0$) spectral density and another which arises from our averaging over the two GW polarizations \cite{Thrane:2009fp, PhysRevD.110.043031}.} We also 
assume that ${\mathcal{P}}(f,\hat{\Omega})$ can be factored into its separate spectral and angular components:
    \begin{align}
    \label{eq:p_f_theta}
     \mathcal{P}(f, \hat{\Omega}) = H(f) \, \mathcal{P}(\hat{\Omega}),
    \end{align}
where $H(f)$ is a dimensionless quantity assumed to be a power law both for its simplicity and its ability to approximate most interesting SGWB models \cite{o1iso, allenromano, s4iso}, 
     \begin{align}
     H(f) =\left(\frac{f}{f_{\rm ref}}\right)^{\alpha -3}.
     \end{align}
\textcolor{black}{The factorization in Eq.~\eqref{eq:p_f_theta} is performed under the assumption that the GW power is constant over each frequency band we consider, and it is used in other SGWB analyses, including broadband searches with bandwidths over 1700 Hz \cite{Thrane:2009fp,o4aniso}.} We set the reference frequency $f_{\rm ref} = 25$ Hz \textcolor{black}{as is used in recent SGWB analyses \cite{O3stochdir,o4aniso}, and}  the spectral index $\alpha$ may be assigned a value depending on the type of SGWB model being considered. For the astrophysical SGWB resulting from merger events between compact objects such as black holes and neutron stars, $\alpha = 2/3$ \cite{O3stochdir}.

Next, \textcolor{black}{we decompose} the angular distribution of the background $\mathcal{P}(\hat{\Omega})$ into its spherical harmonic components $Y_{\ell m}$ and coefficients $a_{\ell m}$:
\begin{align}
\mathcal{P}(\hat\Omega) = \sum_{\ell=0}^{\ell_{\rm max}} \sum_{m=-\ell}^{\ell}  a_{\ell m}\,Y_{\ell m}(\hat\Omega).
\end{align} 
Note that these definitions assume that each GW frequency bin has the same anisotropy. This assumption can be relaxed, and the analysis can be pursued in separate frequency bins (or bands) $\fb$. In this case, the spherical harmonic coefficients will become frequency dependent and we denote them as $a_{\fb, \ell m}$ \textcolor{black}{\cite{PhysRevD.108.023011, yang2023}}. 

The value of $\ell_{\rm max}$ determines the highest-order spherical harmonic modes that are considered in the analysis. This value may be chosen differently for different spectral models of the SGWB, and while $\ell_{\rm max}=4$ is typically used when $\alpha = 2/3$ \cite{O3stochdir}, $\ell_{\rm max} = 8$ is used here. This choice of $\ell_{\rm max}$ is more appropriate for the frequency range over which we will evaluate the angular power spectrum. It has been shown that while $\ell_{\rm max}=4$ is an optimal choice for making a GW detection of the chosen SGWB model, it corresponds to an angular resolution that is lower than what is resolvable by the Advanced LIGO-Virgo detectors \cite{floden2022}, which justifies considering spherical harmonic modes corresponding to a higher angular resolution.

Next, apply a short-term Fourier transform of the time series output $s_I(t)$ to obtain
	\begin{align}
    \tilde{s}_I(\textcolor{black}{t;f}) \equiv \int^{t+\tau / 2}_{t - \tau / 2} dt' e^{-i 2 \pi f t'} s_I(t').
	\end{align}
Here, $\tau$ is chosen to be 192 seconds, which is long enough to be much greater than the light travel time between detectors but short enough so as to minimize effects due to changes in the detectors' response functions induced by the Earth rotatation \cite{Thrane:2009fp, O3stochdir}. The cross-correlation between detectors $I$ and $J$ is
	\begin{align}
		\hat{C}(t;f)=\frac{2}{\tau}\tilde{s}_I^*(t;f)\tilde{s}_J(t;f),
	\end{align}
where, to simplify the notation, the $I,J$ indices corresponding to the two detectors from the cross-spectra terms have been dropped. Assuming that the noise of the two detectors is uncorrelated \cite{Thrane:2009fp}\footnote{{\color{black}While some sources of correlated noise such as Schumann resonances do exist, they do not affect searches for the SGWB given the current sensitivities of Advanced LIGO and Advanced Virgo \cite{o3stochiso,Janssens:2023anf,Venikoudis:2024zmk}.}}, the expectation value of the cross-correlation spectrum can be expressed as
	\begin{align}
		\langle \hat{C}(t;f)\rangle=\sum_{\ell=0}^{\ell_{\rm max}} \sum_{m=-\ell}^{\ell}H(f)\gamma_{\ell m}(t;f){a}_{\fb,\ell m},
	\end{align}
where frequency $f$ belongs to the band $f_b$.
The overlap reduction function, $\gamma_{\ell m}(t;f)$, is a geometric factor that accounts for the relative orientation and separation of the two detectors in a given pair \cite{Thrane:2009fp}. The covariance matrix of the cross-correlation spectra is given by
\begin{align}
    N_{ft,f't'}= \langle \hat{C}(t;f)\hat{C}^*(t';f')\rangle - \langle \hat{C}(t;f)\rangle \langle \hat{C}^*(t';f')\rangle  \nonumber \\ 
    \approx \delta(t-t')\delta(f-f')P_I(t;f)P_J(t;f), \label{approx}
    \end{align}
Here, $P_I(t;f)$ and $P_J(t;f)$ correspond to the one-sided power spectral densities of the two GW detectors' outputs. Assuming the present GW signal powers (both cross- and auto-correlated) are much  {\color{black}smaller} than the detector noise power, we obtain the approximation in Eq. \eqref{approx}. For a given set of $a_{\fb,\ell m}$, the likelihood function for the cross-correlation spectra in the frequency band $f_b$ is proportional to
\begin{equation}\label{equ:gauss_density}
\begin{aligned}
 & \exp\Bigl(-\sum_{\fc,t} [\hat{C}(t;\fc)- H(\fc) \, \sum_{\ell m}\gamma_{\ell m}(t;\fc) \, {a}_{\fb,\ell m}]^*
\\
& \quad N^{-1}_{\fc t,\fc t} [\hat{C}(t;\fc)- H(\fc) \, \sum_{\ell m}\gamma_{\ell m}(t;\fc) \, {a}_{\fb,\ell m}]\Bigl)
\end{aligned}
\end{equation}
and the summation is over $\fc \in \fb$ and $t \in [0,T)$. The maximizers of this likelihood function have the form 
	\begin{align}\label{eq:p0}
		\hat{a}_{\fb,\ell m}=\sum_{\ell' m'} \big(\Gamma_{\fb}^{-1}\big)_{\ell m, \ell' m'}\hat{X}_{\fb,\ell' m'}
	\end{align}
where 
	\begin{align}\label{eq:dirty}
		\hat{X}_{\fb,\ell m}=\sum_{\fc\in \fb,t\in [0,T) }\gamma_{\ell m}^* (t;\fc)\frac{H(\fc)}{P_I(t;\fc) P_J(t;\fc)}\hat{C}(t;\fc)
	\end{align}
	\begin{align}\label{eq:fisher}
		\Gamma_{\fb,\ell m ,\ell' m'} = \sum_{\fc\in \fb,t\in [0,T) } \gamma_{\ell m}^*(t;\fc) \frac{H^2(\fc)}{P_I(t;\fc) P_J(t;\fc)}\gamma_{\ell' m'}(t;\fc).
	\end{align}

\textcolor{black}{Eq. \eqref{eq:dirty} defines a map of the gravitational-wave background convolved with the antenna pattern of the detector pair, the so-called dirty map. The covariance matrix of the dirty map is given in  Eq. \eqref{eq:fisher} and is referred to as the Fisher matrix. Since Eq. \eqref{eq:p0} represents the deconvolution of the GW signal from the detector response, it is the so-called clean map. The covariance matrix of the clean map is obtained by inverting the covariance matrix of the dirty map. Detector insensitivity to certain sky directions manifests in small eigenvalues in the Fisher matrix that complicate its inversion. Such regularization is typically implemented via singular value decomposition to mitigate small singular values arising from directions in the sky to which the detector network is insensitive \citep{Thrane:2009fp,O3stochdir}. However, this procedure introduces a bias that propagates the estimators and covariance of the angular power spectrum in non-trivial ways \citep{yang2023}. To avoid this bias, the Fisher matrix is not regularized in this study. Consequently, a lack of regularization will give increased uncertainties in estimators of the clean map and angular power spectrum \cite{o4aniso}.}

 Note that if Eq. (\ref{equ:gauss_density}) holds,  each $\hat{a}_{\fb,\ell m}$ is a linear combination of complex Gaussian variables; thus, they also
follow a complex Gaussian distribution with mean 
$\langle \hat{a}_{\fb,\ell m} \rangle={a}_{\fb,\ell m}$ (discussed further below) 
and covariance $\operatorname{Cov} (\hat{a}_{\fb,\ell m},\hat{a}_{\fb,\ell' m'} )=(\Gamma^{-1}_{\fb})_{\ell m,\ell' m'}.$ Indeed the same is true, at least approximately, even if Eq. (\ref{equ:gauss_density}) does not hold, but the summation over $t$ in Eq.~\eqref{eq:dirty} is taken over a large number of segments (as is the case with the Advanced LIGO and Advanced Virgo observing runs). \textcolor{black}{This is true due to the central limit theorem as well as the assumption that the SGWB signal is weak relative to detector noise \cite{allenromano,RomanoCornish,Regimbau_2011}.} 

The estimates $\hat{a}_{\fb,\ell m}$ are evaluated by summing over discrete bins in the frequency band $\fb$. Broadband analyses perform this sum over the entire available range of frequencies to which the detectors are most sensitive \cite{O3stochdir}. However, in the analysis presented here, we obtain multiple samples of $\hat{a}_{\fb,\ell m}$ corresponding to sums over different frequency bands. The subscript $\fb$ used in Eq. \eqref{equ:gauss_density} and onwards denotes the frequency range over which each set of $\hat{a}_{\fb,\ell m}$, $\Gamma_{\fb,\ell m,\ell' m'}$, and $\hat{X}_{\fb,\ell m}$ is evaluated. Similar notation is employed throughout the manuscript whenever a quantity is evaluated over some discrete range of frequencies rather than considering $\fb$ to be a continuous variable.

From a statistical perspective, maximum likelihood is fundamentally a method for estimating deterministic quantities. This means the $\hat{a}_{\fb,\ell m}$ in Eq.~\eqref{eq:p0} serve as maximum likelihood estimators of the spherical harmonic coefficients, $a_{\fb,\ell m}$, that uniquely characterize the SGWB of the only realization of the universe available. Therefore, similarly to what is implicitly done in many statistical analyses, the $a_{\fb,\ell m}$ are not treated as random variables \citep[e.g.,][]{Thrane:2009fp, abbott2017directional, abbott2019directional, O3stochdir}. This is a reasonable assumption in practical scenarios, and treating the $a_{\fb,\ell m}$ as fixed avoids the risk of incorrectly specifying their underlying distribution, which can substantially bias the resulting inference. 

The proposed statistical methodology requires that multiple replicates of $\hat{a}_{\fb,\ell m}$ are available. To ensure that is the case, we partition the total observation time $T$ into $S$ segments, each of duration $\Delta T$, so that $T=\Delta T \cdot S.$ \textcolor{black}{We also divide the frequency range into $B$ discrete frequency bins $f_b$.} 

For each \(s = 1, \dots, S\), define \(\hat{a}_{\fb,s,\ell m}\), \(\hat{X}_{\fb,s,\ell m}\), and \(\Gamma_{\fb,s,\ell m,\ell' m'}\) using Eqs.~\eqref{eq:p0}-\eqref{eq:fisher}, but sum over $t$ in the interval 
\(\bigl[(s-1)\Delta T,\,s \Delta T\bigr)\) instead of \([0,T)\). It can be easily verified that $\hat{a}_{\fb,s,\ell m}$ are independent (across $s$) complex Gaussian variables, each with mean \(\langle \hat{a}_{\fb,s,\ell m} \rangle = a_{\fb,\ell m}\) and covariance 
$ {\rm Cov}(\hat{a}_{\fb,s,\ell m}, \hat{a}_{\fb,s,\ell' m'}) 
= (\Gamma_{\fb,s}^{-1})_{\ell m, \ell' m'}$. Because the detector sensitivity varies over time -- due to factors such as transient environmental noise, detector maintenance, and modifications -- we do not assume that ${\Gamma}_{\fb,s}^{-1}$ is identical across different time segments $s.$

Since the detector network consists of two Advanced LIGO detectors and one Advanced Virgo detector, there are three pairs of detectors to consider, or three `baselines'. Each baseline has its own dirty map and Fisher matrix, and these may be summed together to obtain one overall dirty map and one overall Fisher matrix corresponding to the entire detector network \cite{Thrane:2009fp}. 

Now, consider the estimator of the squared angular power in each mode $\ell$ \cite{Thrane:2009fp,RomanoCornish},
\begin{align}
\label{eq:C_ell}
\hat{A}_{\fb,s,\ell}=\frac{1}{1+2\ell}\sum_{m=-\ell}^{\ell} \left[|\hat{a}_{\fb,s,\ell m}|^2 - (\Gamma^{-1}_{\fb,s})_{\ell m,\ell m}\right].
\end{align}
Denote with $\Ahat_{\fb,s}$ the $\ell_{\max}\times 1$-dimensional vector with components $\hat{A}_{\fb,s,\ell}$ for all $\ell= 1,\ldots, \ell_{\rm max}$. For each $s=1,\dots,S,$ the estimator $\Ahat_{\fb,s}$ has mean $\langle\Ahat_{\fb,s}\rangle$ and covariance matrix $\Sigma_{\fb,s}$, whose \((\ell,\ell')\)-th element is
\begin{eqnarray}
\label{eqn:cov_text}
\Sigma_{\fb,s,\ell\ell'} & = & \frac{1}{(1+2\ell)(1+2\ell')} 
\sum_{m ,m'} \Big( |( {\Gamma}_{\fb,s}
^{-1})_{\ell m,\ell' m'}|^2 \nonumber \\
& + & 2\Rel[a_{\fb,\ell m}^* \; 
({\Gamma}_{\fb,s}^{-1})_{\ell m,\ell' m'} \;
a_{\fb,\ell' m'}] \Big). 
\end{eqnarray}

 \textcolor{black}{One could always replace ${\Gamma}_{\fb,s}^{-1}$ in Eq.~\eqref{eq:C_ell} and Eq.~\eqref{eqn:cov_text} with its regularized counterpart if a regularization scheme was applied to invert the Fisher matrix \citep[Cf.][] {Thrane:2009fp,Agarwal_2023,PhysRevD.100.043541}.} The derivation of Eq.~\eqref{eqn:cov_text} is given in Appendix \ref{covariance_appendix}.
Previous works \citep[e.g.,][]{Agarwal_2023} have omitted the second term in the numerator of Eq.~\eqref{eqn:cov_text}, implicitly making the assumption that $a_{\fb,\ell m}$'s are complex Gaussian random variables with zero mean. However, this assumption is not consistent with estimating  
the coefficients $a_{\fb,\ell m}$ (as realized in our universe) via the maximum likelihood technique -- that is, likelihood maximization implicitly assumes the coefficients to be the (deterministic) limit to which the maximum likelihood estimates converge. In this manuscript, this inconsistency is avoided by treating $a_{\fb,\ell m}$ as fixed. 

\section{Statistical Methods}
\label{sec:III}

The main goal is to develop an inferential strategy for estimating angular power spectral models and propose a statistical test to assess whether the proposed model is consistent with the data derived from the estimator in Eq.~\eqref{eq:C_ell}.

\begin{equation}
\begin{aligned}
    \label{eqn:test}
    &H_0: \langle\Ahat_{\fb,s}\rangle=\bm{A}_\fb(\bmtheta) \quad \hbox{vs.}
    &H_a: \langle\Ahat_{\fb,s}\rangle\neq\bm{A}_\fb(\bmtheta)
\end{aligned}
\end{equation}
without assuming $\Ahat_{\fb,s}$ follows any specific distribution.

\subsection{\label{sec:III.1}Parameter and covariance estimation}

Begin by estimating the true value of the parameter vector $\bmtheta$, hereinafter denoted by $\bmthetao$, using the vectors $\Ahat_{\fb,s}$. For the moment, assume that the covariance matrices $\Sigma_{\fb,s}$ are known, and define the \emph{sphered errors} as
\begin{eqnarray}
\label{eqn:error}
   \bm{\epsilon}_{\fb,s}(\bm{\theta})=
   \Atilde_{\fb,s}-\Atilde_{\fb,s}(\bmtheta),
\end{eqnarray}
with 
\begin{equation}
\label{eqn:sphering}
\Atilde_{\fb,s}=\Sigma_{\fb,s}^{-1/2}\Ahat_{\fb,s}\quad\text{and} \quad \Atilde_{\fb,s}(\bmtheta)=\Sigma_{\fb,s}^{-1/2}\bm{A}_{\fb}(\bmtheta),\end{equation}
{\color{black}where $\Sigma_{\fb,s}^{-1/2}$ denotes the principal square-root matrix\footnote{\textcolor{black}{The computationally convenient formula is given by $\Sigma_{f_b,s}^\alpha=Q \Lambda^\alpha Q^T$, where $\Sigma_{f_b,s}=Q \Lambda Q^T$ is the Schur decomposition of a positive definite matrix, and $\Lambda^\alpha$ denotes the diagonal matrix whose entries are the $\alpha$th powers of the eigenvalues. 
Here, $\alpha=-1$ corresponds to the classical matrix inverse and $\alpha=-1/2$ corresponds to the inverse square root. }} of $\Sigma_{\fb,s}^{-1}$} and can be obtained via standard procedures\footnote{{\color{black}In the applications to follow, the square root matrix has been computed via the Schur method \citep[e.g.,][Ch. 6]{higham}. Nonetheless, other methods to construct the square root matrix, such as diagonalization, Jordan decomposition, etc., are also viable options.}}\citep{higham, Horn_Johnson}.
The transformation in Eq. \eqref{eqn:sphering} is known as \emph{whitening} or \emph{sphering} and enables the construction of a statistically orthonormalized version of the angular power spectrum $\Ahat_{\fb,s}$. In other words, under $H_0$ in Eq.~\eqref{eqn:test}, $\Atilde_{\fb,s}$ has mean given by $\Atilde_{\fb,s}(\bmtheta)$, its components are uncorrelated, and each has a unit variance.

The true value of the unknown parameter $\bmthetao$ can be estimated by solving the \emph{generalized nonlinear least squares} problem:
\begin{equation}
\begin{aligned}
    \label{optim}
    \thetahat
    &= \argmin_{\bmtheta}
\sum_{ b,s} \bm{\epsilon}_{\fb,s}(\bm{\theta})^T \bm{\epsilon}_{\fb,s}(\bm{\theta}).
    \end{aligned}
\end{equation} 
The optimization in Eq. \eqref{optim} is the same as a maximization of the likelihood 
of $\Ahat_{\fb,s}$ under the assumption of Gaussianity. Yet, its underlying rationale differs in that it aims to minimize the squared differences of the sphered errors, regardless of their distribution. 
In general, such a minimization does not enjoy a closed-form solution and is typically solved using iterative numerical methods \cite{1999numerical}.

The solution of Eq. \eqref{optim}, can be shown to be consistent\footnote{\textcolor{black}{
Intuitively, consistency means that as more data are collected, the estimator $\widehat{\bmtheta}$ eventually becomes arbitrarily close to the true value $\theta_0$. A simple sufficient condition for consistency is that $\widehat{\theta}$ is asymptotically unbiased and its variance converges to zero. }}
 under $H_0$. That is, if the model $\bm{A}_\fb(\bmtheta)$ is correctly specified, and if it is identifiable -- i.e., $\bm{A}_\fb(\bmtheta) =  \bm{A}_\fb(\bmtheta')$ implies $\bmtheta = \bmtheta'$ \citep[Cf.][Ch. 7]{davidson} -- then, $\thetahat$ converges in probability \cite[for a definition see][Ch. 2]{van2000asymptotic} to $\bmthetao$. Furthermore, under additional classical regularity conditions \citep[Cf.][Ch. 7]{davidson}, $\thetahat$ is approximately Gaussian distributed with mean $\bmthetao$ and its variance approaches zero as $N\rightarrow \infty$. This can be seen by taking the derivative of the \textcolor{black}{function being minimized in Eq. \eqref{optim}} 
with respect to $\bmtheta$, which shows that the generalized least squares estimator $\thetahat$ solves the system of $p$ estimating equations:
\begin{equation}
\label{est_eq}
   \frac{1}{\sqrt{N}}\sum_{b,s}\dot{\bm{A}}_{\fb,s}(\hat{\bmtheta})^T\bm{\epsilon}_{\fb,s}(\hat{\bmtheta}) =\bm{0},
\end{equation}
where $\dot{\bm{A}}_{\fb,s}(\hat{\bmtheta})$ denotes the $\ell_{\rm max} \times p$ matrix with columns corresponding to the partial derivatives of $\tilde{\bm{A}}_{\fb,s}(\bmtheta)$ evaluated at $\hat{\bmtheta}$, i.e., 

\begin{equation}
\dot{\bm{A}}_{\fb,s}(\hat{\bmtheta}) = \Sigma_{\fb,s}^{-1/2} \left[\frac{\partial}{\partial \theta_1}\bm{A}_\fb(\bmtheta) ,\ldots, \frac{\partial}{\partial \theta_p}\bm{A}_\fb(\bmtheta) \right]\biggl|_{\bmtheta=\hat{\bmtheta}},
\end{equation}
in which $\theta_j$, $j=1,\dots,p$ are the elements of $\bmtheta$. A first-order Taylor expansion of the left-hand side of Eq.~\eqref{est_eq} around $\thetahat=\bmthetao$ leads to:
    \begin{multline}
        \label{eqn:linearized}
    \frac{1}{\sqrt{N}} \sum_{b,s}\dot{\bm{A}}_{\fb,s}(\bmthetao)^T \errors_{\fb,s}(\bmthetao)-\bm{R}_p\sqrt{N}(\thetahat-\bmthetao) +o_p(1) 
    = \bm{0}
    \end{multline}
where the notation $o_p(1)$ is used to indicate random terms that converge to zero in probability as $N\rightarrow\infty$ and $\bm{R}_p$ is a $p\times p$ matrix given by
\begin{equation}
\bm{R}_p=\frac{1}{N}\sum_{b,s} \dot{\bm{A}}_{\fb,s}(\bmthetao)^T \dot{\bm{A}}_{\fb,s}(\bmthetao).
\end{equation} By rearranging the terms in Eq. \eqref{eqn:linearized}, we obtain the asymptotic representation:
\begin{equation}
\begin{aligned}
 \label{eqn:asymp}
    \sqrt{N}(\thetahat-\bmthetao)&=\frac{1}{\sqrt{N}}\bm{R}_p^{-1}\sum_{b,s}\dot{\bm{A}}_{\fb,s}(\bmthetao)^T\bm{\epsilon}_{\fb,s}(\bmthetao)+o_p(1).
\end{aligned}
\end{equation}
By the central limit theorem, it follows that $\thetahat$ is approximately Gaussian distributed with mean $\bmthetao$ and variance-covariance matrix $\bm{R}_p^{-1}/N$. 

When \(\Sigma^{-1}_{\fb,s}\) is unknown but a uniformly (for all $s$) consistent estimator is available, the statistical properties of \(\thetahat\) described above remain valid provided that this estimator is substituted for \(\Sigma^{-1}_{\fb,s}\) in Eq.~\eqref{eqn:sphering} \citep[Cf.][Ch.~7]{davidson}.

In our case, such an estimator for \( \Sigma^{-1}_{\fb,s}\) can be obtained by estimating $\Sigma_{\fb,s}$ from Eq.~\eqref{eqn:cov_text}, and a corresponding ‘plug-in’ estimator for $\Sigma_{\fb,s}$ is now briefly described. 
Notice that
\begin{equation}\label{equ:a-S}
    \hat{a}_{\fb,\ell m} 
= \sum_{s=1}^S \sum_{\ell', m'} \bigl(\Gamma_{\fb}^{-1}\bigr)_{\ell m, \ell' m'} \,\hat{X}_{\fb,s,\ell' m'},
\end{equation}
and hence $\hat{a}_{\fb,\ell m}$ depends on  the number of segments $S$.  In Appendix~\ref{app:d}, it is established that  $\hat{a}_{\fb,\ell m}$ converges in probability to $a_{\fb,\ell m}$ as $S \to \infty$\footnote{Note that as $S$ increases, so does the total observation time $T$.}.  The estimator has elements $(\hat{\Sigma}_{\fb,s} )_{\ell,\ell'}$ obtained by replacing $a_{\fb,\ell m}^*$ and $a_{\fb,\ell m}$ with $\hat{a}_{\fb,\ell m}^*$ and $\hat{a}_{\fb,\ell m}$ in Eq. \eqref{eqn:cov_text}. 
A theoretical analysis confirming the positive definiteness of these covariance estimators and a detailed proof that the inverse $\hat\Sigma^{-1}_{\fb,s}$ is a uniformly consistent estimator of $\Sigma^{-1}_{\fb,s}$ can be found in Appendix~\ref{app:d}.

\subsection{\label{sec:inference}Distribution-free testing of the angular power spectum}

In its essence, the statistical formulation of the problem provided in Sec.~\ref{sec:III.1} is that of a regression problem in which the estimated angular power spectrum, $\Ahat_{\fb,s}$, plays the role of the outcome whose mean we aim to describe through the model $\bm{A}_{\fb,s}(\bmtheta)$. We can, therefore, employ constructs from the theory of goodness-of-fit for regression to assess the validity of such a model. 

The main building blocks needed to devise a test for the hypotheses in Eq. \eqref{eqn:test} are the \emph{sphered residuals}. They can be obtained by replacing $\bmtheta$ in Eq. \eqref{eqn:error} with the solution of Eq. \eqref{optim}, i.e.,
\begin{eqnarray}
\label{residuals}
\residuals_{\fb,s}=\Atilde_{\fb,s}-\Atilde_{\fb,s}(\hat{\bmtheta}).
\end{eqnarray}
In principle, one could test the hypotheses in Eq. \eqref{eqn:test} using classical goodness-of-fit statistics constructed as the sum of some function of the sphered residuals -- with  Pearson's $\chi^2$ being possibly the most prominent example among those -- and derive their distribution under the hypothesis $H_0$ using the parametric bootstrap \citep[Cf.][]{freedman81} based on the generalized $\chi^2$ distribution derived in Appendix~\ref{generalized_chi}. Such statistics, however, are known to exhibit no statistical power\footnote{\textcolor{black}{In statistics, the concept of `power' refers to the sensitivity of a test in detecting departures from the model hypothesized under the null hypothesis.} } toward infinitely many possible `weak' deviations from the postulated model \citep[e.g.,][]{khm84,algeri2024pearson,acharya2024spectral}. That is, there are infinite possible ways in which the hypothesized model could be wrong, but such statistics would not be able to identify them. 
Test statistics based on the process of partial sums of the residuals \citep[e.g.,][]{loynes80,  stute97}, however, can restore statistical power \citep[e.g.,][]{khm84,algeri2024pearson,acharya2024spectral} and are thus recommended. 

\subsubsection{Model testing via partial sums of the residuals}
Let $\errors$ and $\residuals$ denote, respectively,  the $N\times 1$ stacked vectors of errors $\errors_{\fb,s}(\bm{\theta}_0)$ and residuals $\residuals_{\fb,s}$ for all frequency bins and data segments, that is, 
\begin{equation}
\errors = \left[\errors_{f_1,1}^T(\bm{\theta}_0),\dots,\errors_{f_B,1}^T(\bm{\theta}_0), \dots, \errors_{f_1,S}^T(\bm{\theta}_0),\dots, \errors_{f_B,S}^T(\bm{\theta}_0)\right]^T,
\end{equation}
and 
\begin{equation}
\residuals = \left[\residuals_{f_1,1}^T,\dots,\residuals_{f_B,1}^T, \dots, \residuals_{f_1,S}^T,\dots, \residuals_{f_B,S}^T\right]^T.
\end{equation}
The process\footnote{\color{black}The term ``process" here denotes a stochastic process, where $w_{k,N}$ is a sequence of discrete random variables indexed by $k$ and of length $N$.} of partial sums of the sphered residuals is 
\begin{equation}
\begin{split}
    \label{partial_sums}
    w_{k,N}&=  \frac{1}{\sqrt{N}}\residuals^T \mathbb{I}_{k,N},\\
  \text{with}\quad & \mathbb{I}_{k,N}=[\underbrace{1,\dots,1}_{k},\underbrace{0,\dots,0}_{N-k}]^T.
    \end{split}
\end{equation}
It consists of normalized cumulative sums of the first $k$ elements of $\residuals$ with $k=1,\dots,N$. 

Following the same argument of \citet{khm21}, it is possible to show that $w_{k,N}$ is asymptotically equal to a projection of the process of partial sums of the sphered errors given by
\begin{equation}
    \label{partial_sums_errors}
 \frac{1}{\sqrt{N}}\errors^T \mathbb{I}_{k,N}.
\end{equation}
As shown below, such a projection plays a fundamental role in deriving the limiting distribution of $w_{k,N}$.

Consider a first-order Taylor expansion of the residuals in Eq. \eqref{residuals} evaluated at $\thetahat=\bmthetao$, that is, 
\begin{equation}
\label{eqn:residuals_expansion}
\begin{split}
\residuals_{\fb,s}&=\errors_{\fb,s}(\bmthetao)-\dot{\bm{A}}_{\fb,s}(\bmthetao)(\thetahat-\bmthetao)+o_p(1).
\end{split}
\end{equation}
Substituting the right-hand side of Eq. \eqref{eqn:asymp} for $\thetahat-\bmthetao$ in Eq. \eqref{eqn:residuals_expansion} leads to
\begin{equation}
\label{eqn:projected_residuals}
\begin{split}
\residuals_{\fb,s}=&\errors_{\fb,s}(\bmthetao)\\
&-\frac{1}{N}\dot{\bm{A}}_{\fb,s}(\bmthetao)\Bigl[\bm{R}_p^{-1}\sum_{\fb,s} \dot{\bm{A}}_{\fb,s}(\bmthetao)^T\errors_{\fb,s}(\bmthetao)\Bigl]+o_p(1).\\
\end{split}
\end{equation}
Let $\dot{\bm{A}}(\bm{\theta}_0)$ be a $N\times p$ matrix denoting the
stacked matrix of $\dot{\bm{A}}_{\fb,s}(\bmthetao)$ across all frequency bins and time segments, i.e., 
\begin{equation}
    \dot{\bm{A}}(\bm{\theta}_0)
=
\begin{bmatrix}
\dot{\bm{A}}^T_{f_1,1}(\bm{\theta}_0),\cdots,\dot{\bm{A}}^T_{f_B,1}(\bm{\theta}_0),\cdots,\dot{\bm{A}}^T_{f_B,S}(\bm{\theta}_0)
\end{bmatrix}^T,
\end{equation}
then, the vector $\residuals$, with elements given by Eq.~\eqref{eqn:projected_residuals}, can be written as
\begin{equation}
\begin{split}
    \label{projected_residuals_vec}
\residuals&= \errors-\frac{1}{N}\dot{\bm{A}}(\bm{\theta}_0)\bm{R}_p^{-1}\dot{\bm{A}}(\bm{\theta}_0)^{T}\errors+o_p(1)\\
&=\errors-\sum_{j=1}^p\bm{\mu}_{\bmtheta_j} \bm{\mu}_{\bmtheta_j} ^{T}\errors+o_p(1),
\end{split}
\end{equation}
where $\bm{\mu}_{\bmtheta_j}$ corresponds to the $jth$ column of the matrix
\begin{equation}
\frac{1}{\sqrt{N}} \dot{\bm{A}}(\bm{\theta}_0)\bm{R}_p^{-1/2}.
\end{equation}

From Eq. \eqref{projected_residuals_vec}, it follows that the sphered residuals are asymptotically equal to a projection of the errors orthogonal to the vectors $\{\bm{\mu}_{\bmtheta_j}\}_{j=1}^p$. Similarly, the process $w_{k,N}$ can be written as:
\begin{equation}
    \label{projected}
    w_{k,N}=\frac{1}{\sqrt{N}}\errors^{T} \Bigl(\bm{I}_N-\sum_{j=1}^p\bm{\mu}_{\bmtheta_j} \bm{\mu}_{\bmtheta_j} ^{T}\Bigl) \mathbb{I}_{k,N}+o_p(1),
\end{equation}
where $\bm{I}_N$ denotes the $N\times N$ identity matrix. It follows that, in the limit, $w_{k,N}$ is a projection of the process in Eq. \eqref{partial_sums_errors} orthogonal to the vectors $\{\bm{\mu}_{\bmtheta_j}\}_{j=1}^p$. 

The process in Eq. \eqref{partial_sums_errors} can be shown to be asymptotically distributed as a Brownian motion if $H_0$ in Eq. \eqref{eqn:test} is true. This result follows from the fact that such a process consists of partial sums of weakly dependent\footnote{Intuitively, a sequence of random variables is said to be `weakly dependent' if the dependence among the elements of the sequence reduces and eventually ceases to exist for elements that are sufficiently far away from one another within the sequence.} quantities \citep[Cf.][]{philipp,bradley2005basic} with mean zero and finite variance.  The condition of weak dependence is satisfied since the subvectors $\errors_{\fb,s}(\bmthetao)$ of $\errors$ are all independent from one another. 
It follows that since  $w_{k,N}$ is a projection of the process in Eq. \eqref{partial_sums_errors}, its limiting distribution under $H_0$ is that of a projected Brownian motion.
Its mean and covariance are \begin{equation}
    \label{eqn:mean2}
    \begin{split}
    \langle w_{k,N} \rangle &\simeq 0 
\qquad \text{and}\\
 \langle w_{k,N} w_{k',N}\rangle
   & \simeq \frac{1}{N} \mathbb{I}_{k,N}^T\Bigl(\bm{I}_N-\sum_{j=1}^p \bm{\mu}_{\bmtheta_j} \bm{\mu}_{\bmtheta_j} ^{T}\Bigl)\mathbb{I}_{k',N} .
        \end{split}
\end{equation}
where the notation $\simeq$ indicates that the functions on both sides are asymptotically equal as $N \to \infty$. 


Test statistics to assess the validity of $H_0$ in Eq. \eqref{eqn:test} can be constructed by considering functionals of the process $w_{k,N}$ in Eq. \eqref{partial_sums}.
Two examples are 
\begin{equation}
    \begin{aligned}
K_N=\max\limits_k|w_{k,N}| \quad \text{and} \quad C_N=\frac{1}{N}\sum_{k=1}^N|w_{k,N}|^2,
\label{eqn:TS}
    \end{aligned}
\end{equation}
which can be seen as the counterparts of the Kolmogorov--Smirnov statistic and the Cramér-von Mises statistic in the context of regression. 

The second asymptotic equality in Eq. \eqref{eqn:mean2} implies that the distribution of $w_{k,N}$ under $H_0$  depends on the model being tested through the vectors $\{\bm{\mu}_{\bmtheta_j}\}_{j=1}^p$. 
Hence, the same is true for the distribution of $K_N$ and $C_N$. Such distributions, however, are unaffected by 
the distribution of the errors in $\errors$ and, therefore, they only depend on the mean and covariance of $\Ahat_{\fb,s}$. This implies that the statistics $K_N$ and $C_N$ always have the same distribution under $H_0$, regardless of the distribution of $\Ahat_{\fb,s}$ being approximately Gaussian in the limit.

To illustrate this aspect with an example, consider testing the models $\bm{A}^{M1}_{\fb}(\bmtheta)$ and $\bm{A}^{M2}_{\fb}(\bmtheta)$, with components given by 
\begin{equation}
\begin{aligned}
\label{eqn:simulated_true_mean}
\bm{A}^{M1}_{\fb,\ell}(\bmtheta) &=  (\theta_0+ \theta_1 \ell + \theta_2 \ell^2) \bar{\fb}^{2/3}, \\ 
\bm{A}^{M2}_{\fb,\ell}(\bmtheta)&=  \exp(\theta_0+ \theta_1 \ell + \theta_2 \ell^2)\bar{\fb}^{2/3}
\end{aligned}
\end{equation}
for $\ell=1,\ldots,8$, $\fb=[20,40],[40,60],\dots,[160,180]$,  $\bmtheta=(\theta_0,\theta_1,\theta_2)$ and $\bar{\fb}$ is the center of the frequency band $\fb$. 

Consider the sphered errors generated either from a standard multivariate Gaussian distribution or independently from Laplace distribution with zero mean  and unit variance. Let the covariance matrices $\bm{\Sigma}_{\fb,s}$ be generated from a Wishart distribution with 10 degrees of freedom, and these matrices are assumed to be known. Denote the unsphered errors as
\begin{equation}
\begin{aligned}
\label{eqn:simulated_true_err}
\bm{u}^{(1)}_{\fb,s} \sim N(\bm{0},\bm{\Sigma}_{\fb,s}), \quad \bm{u}^{(2)}_{\fb,s} \sim L(\bm{0},\bm{\Sigma}_{\fb,s}),
\end{aligned}
\end{equation}
for all $s=1,\dots,S$. We consider four datasets generated from the four possible additive combinations of models and error terms in Eqs. \eqref{eqn:simulated_true_mean}-\eqref{eqn:simulated_true_err} with $\bmtheta=(4,$ -$3,0.4)$. For each of these datasets, we perform $S=15$ repeated simulations, resulting in an effective sample size of $120,$ and a total of $N=960$. The datasets are generated 
$10^4$ times; each time, we calculate a single realization of the process $w_{k,N}$ in Eq.~\eqref{projected}, which yields one observation of the statistics $K_N$ and $C_N$ from Eq.~\eqref{eqn:TS}.

Figure \ref{Fig:1} shows the simulated distribution of the Kolmogorov–Smirnov and Cramér–von Mises statistics in Eq. \eqref{eqn:TS} for each of the four combinations of models and error terms specified in Eqs. \eqref{eqn:simulated_true_mean}-\eqref{eqn:simulated_true_err} under $H_0$ in \eqref{eqn:test} -- that is, the statistics are constructed so that the model being tested and the model used to simulate the data is the same. Observe that the distributions of the test statistics overlap when the underlying models are the same,  regardless of the error distributions. This observation aligns with Eqs. \eqref{eqn:mean2}-\eqref{eqn:TS}, which demonstrate that the distribution of the statistics depends solely on the model being tested through the vectors $\{\bm{\mu}_{\bm{\theta}j}\}_{j=1}^p$.


\begin{figure}
\includegraphics[scale=0.36]{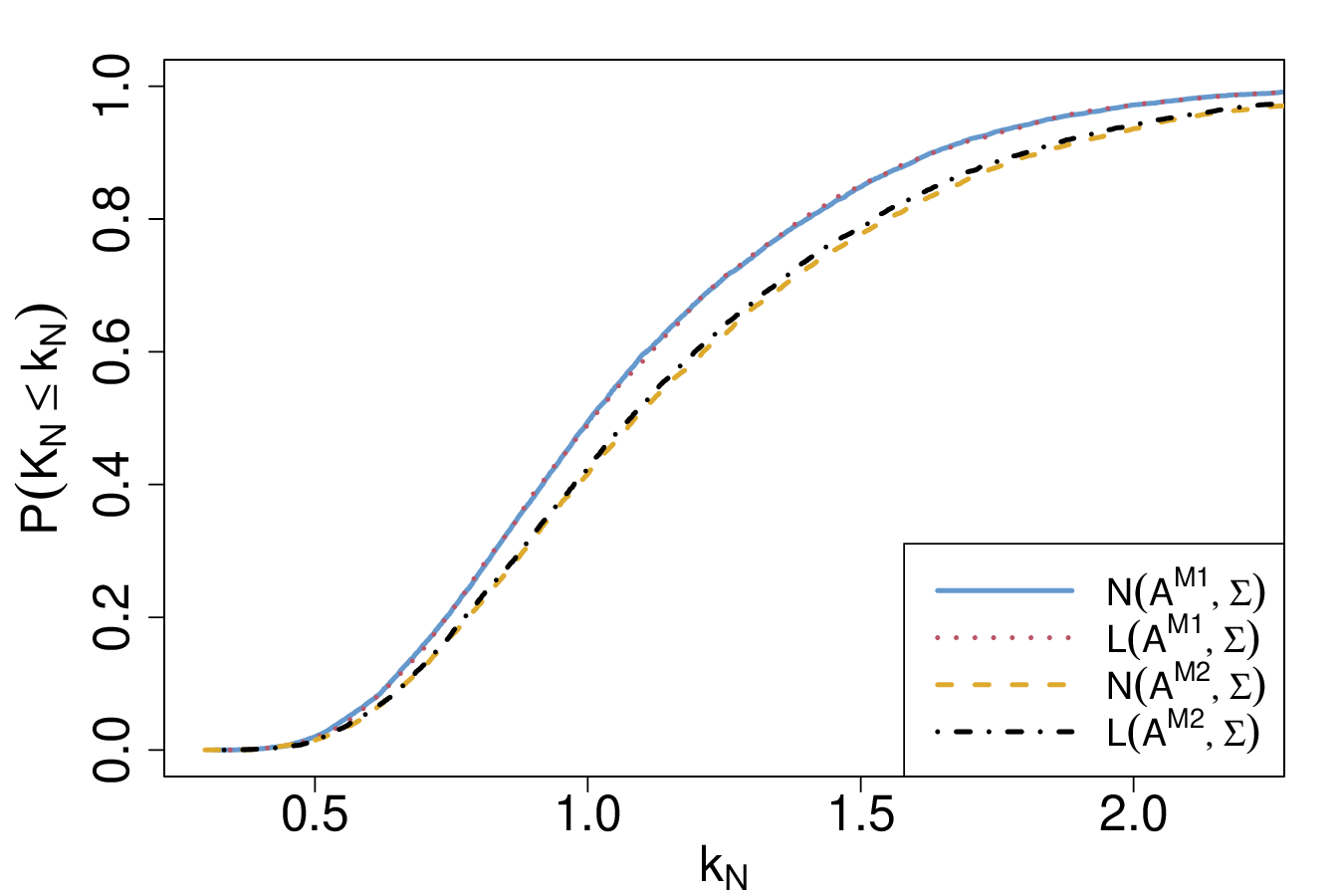}
\includegraphics[scale=0.36]{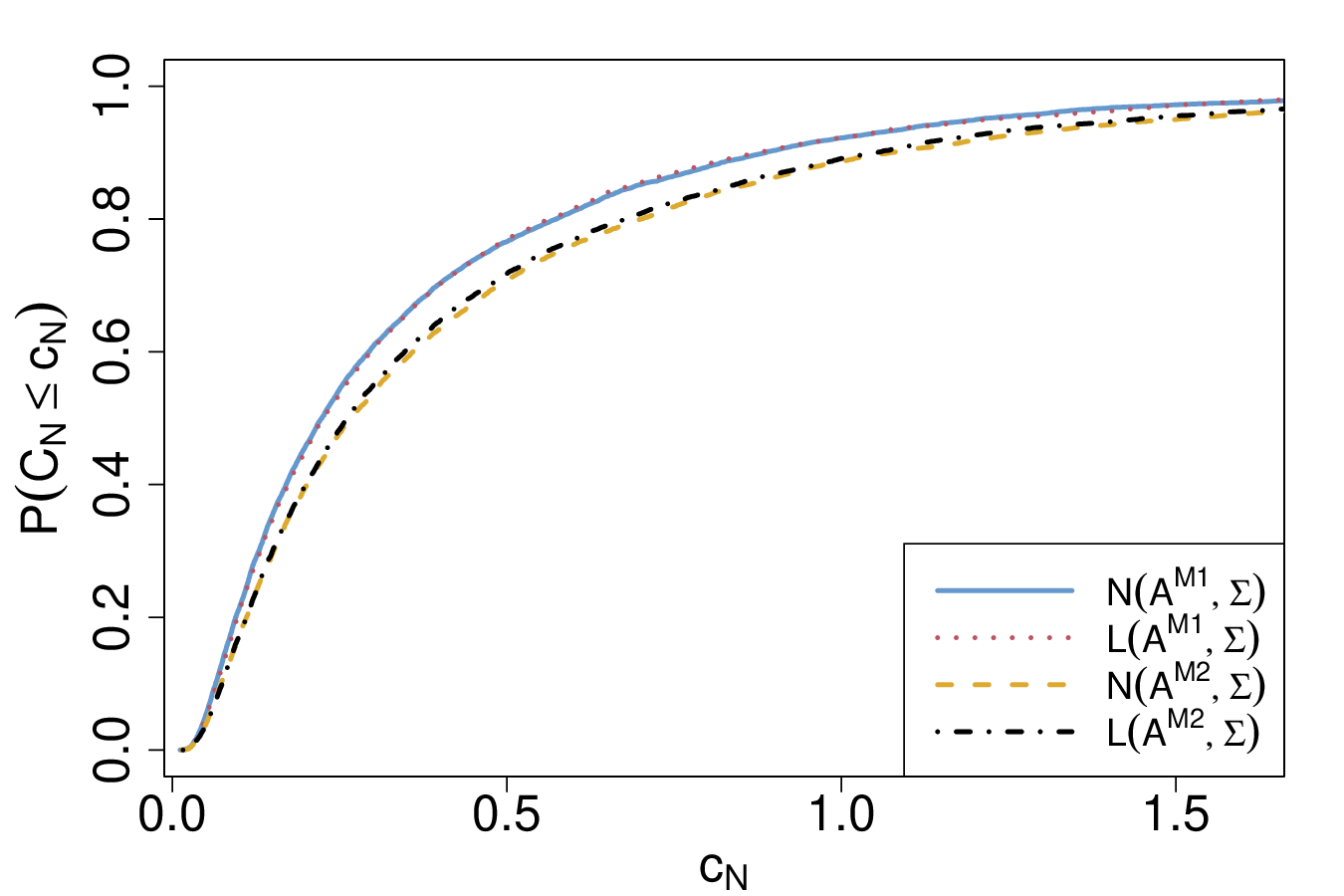}
\caption{Graphs of the simulated null cumulative distribution functions of the Kolmogorov--Smirnov (top) and Cramér--von Mises (bottom) statistics from the four combinations of $\bm{A}^{M1}_{\fb,s}$ and $\bm{u}^{(1)}_{\fb,s}$ (blue solid); $\bm{A}^{M1}_{\fb,s}$ and $\bm{u}^{(2)}_{\fb,s}$ (red dotted); $\bm{A}^{M2}_{\fb,s}$ and $\bm{u}^{(1)}_{\fb,s}$ (yellow dashed); $\bm{A}^{M2}_{\fb,s}$ and $\bm{u}^{(2)}_{\fb,s}$ (darkgreen dash-dotted).
}
\label{Fig:1}
\end{figure}

\subsubsection{\label{sec:III.2.2} Construction of distribution-free test statistics}

Since the null distribution of the test statistics in Eq. \eqref{eqn:TS} depends on the model being tested, it can only be derived on a case-by-case basis for each specific choice of the postulated angular power spectrum model $\bm{A}_\fb(\bmtheta)$. Unfortunately, this process can be computationally burdensome when testing several different and possibly complicated models. 
Nevertheless, it is possible to overcome this limitation by relying on the  \emph{Khmaladze-2 (K2) transform} \cite{khm21}. 

Given an arbitrary orthonormal set of vectors $\{\bm{r}_j\}_{j=1}^p$ in $\mathbb{R}^N$,  the K2 transformation allows us to map $w_{N,k}$
into a process of partial sums, hereafter denoted by $v_{k,N}$ with `standard' limiting distribution under the null hypothesis. Specifically, for sufficiently large $N$, its null distribution is well-approximated by that of a projected Brownian motion with covariance  
\begin{equation}
\label{eqn:cov_vn}
\begin{split}
  \langle v_{k,N} v_{k',N}\rangle \simeq \frac{1}{N}\mathbb{I}_{k,N} ^T \Bigl(\bm{I}_N-\sum_{j=1}^p \bm{r}_j\bm{r}_j^T\Bigl)\mathbb{I}_{k',N}.
\end{split}
\end{equation}
Thus, for any choice of  $\{\bm{r}_j\}_{j=1}^p$ independent from $\bm{A}_f(\theta)$ and such that $\sum_{j=1}^p \bm{r}_j\bm{r}_j^T \neq \bm{I}_N$,
  the distribution of $v_{k,N}$ is also independent from such a model.  One can then rely on the functionals of the newly constructed process $v_{k,N}$ 
  such as 
\begin{equation}
    \begin{aligned}
\widetilde{K}_N=\max\limits_k|v_{k,N}|, \quad \widetilde{C}_N=\frac{1}{N} \sum_{k=1}^N|v_{k,N}|^2, 
\label{eqn:TransTS}
    \end{aligned}
\end{equation}
to test the hypotheses in Eq. \eqref{eqn:test} in a distribution-free manner. As demonstrated in Section \ref{simulations}, this implies one can test different postulated models for the angular power spectrum by relying on a single simulation of the distribution of the statistics in Eq. \eqref{eqn:TransTS}.



The main steps involved in the construction of the process $v_{k,N}$, for any given orthonormal set $\{\bm{r}_j\}_{j=1}^p$, are outlined in \citet{PRL}. Below, we provide a thorough exposition of the technical aspects of such a transformation.

To ease the intuition, let us first consider the simple scenario where the unknown parameter $\bmtheta$ is one-dimensional. This situation is encountered, for example, when the interest is solely in estimating the amplitude of the astrophysical kernel used to generate theoretical values of the angular power spectrum (see Sec. \ref{sec:application}). 
Define the operator $U_{\bm{\mu}_{\bmtheta_1},\bm{r}_1}$ on $\mathbb{R}^N$ as
\begin{equation}
    \label{Umu1r1}
    U_{\bm{\mu}_{\bmtheta_1},\bm{r}_1} =\bm{I}-\frac{\langle\bm{\mu}_{\bmtheta_1}-\bm{r}_1, \cdot \rangle}{1-\langle\bm{\mu}_{\bmtheta_1},\bm{r}_1\rangle}(\bm{\mu}_{\bmtheta_1}-\bm{r}_1),
\end{equation}
where $\bm{I}$ is the identity operator. One can easily verify that  $U_{\bm{\mu}_{\bmtheta_1},\bm{r}_1}$
maps $\bm{\mu}_{\bmtheta_1}$ to $\bm{r}_1$, $\bm{r}_1$ to $\bm{\mu}_{\bmtheta_1}$, and leaves vectors orthogonal to both $\bm{\mu}_{\bmtheta_1}$ and $\bm{r}_1$ unchanged. 
As shown in Appendix \ref{app:e}, such an operator is unitary and self-adjoint. The \emph{K2-transformed errors} are given by 
\begin{equation}
\bm{e} = U_{\bm{\mu}_{\bmtheta_1},\bm{r}_1} \errors
\end{equation}
and we define the \emph{K2-transformed residuals} as 
\begin{equation}
\hat{\bm{e}} = U_{\bm{\mu}_{\bmtheta_1},\bm{r}_1} \residuals.
\end{equation}
Similarly to Eq.~\eqref{projected_residuals_vec}, the K2-transformed residuals are asymptotically equal to a projection of the K2-transformed errors. Specifically,
\begin{equation}
    \begin{aligned}
\hat{\bm{e}} &= U_{\bm{\mu}_{\bmtheta_1},\bm{r}_1} \errors -  U_{\bm{\mu}_{\bmtheta_1},\bm{r}_1} \bm{\mu}_{\bmtheta_1} \bm{\mu}_{\bmtheta_1} ^{T}\errors + o_p(1)\\
&=\bm{e}-\bm{r}_1\bm{r}_1^T\bm{e}  + o_p(1),
\label{eqn:trans_res_1d}
\end{aligned}
\end{equation}
where the second 
equality arises from the fact that 
\begin{equation}
    \begin{aligned}
U_{\bm{\mu}_{\bmtheta_1},\bm{r}_1} &\bm{\mu}_{\bmtheta_1} = \bm{r}_1 \\
\text{ and } \ \ \bm{\mu}_{\bmtheta_1} ^{T}\errors = \langle U_{\bm{\mu}_{\bmtheta_1},\bm{r}_1} &\bm{\mu}_{\bmtheta}, U_{\bm{\mu}_{\bmtheta_1},\bm{r}_1} \errors \rangle = \bm{r}_1^T\bm{e}.
\end{aligned}
\end{equation}
The process of partial sums of the K2-transformed residuals is 
\begin{align}
    \label{transformed_process_p}
   v_{k,N} &= \frac{1}{\sqrt{N}}\hat{\bm{e}}^{T}\mathbb{I}_{k,N} \\
   \notag
    &= \frac{1}{\sqrt{N}}\bm{e}^{T} \Bigl(I_N- \bm{r}_1\bm{r}_1 ^{T} \Bigl) \mathbb{I}_{k,N} +o_p(1);
\end{align}
its limiting distribution is that of a projected Brownian motion with zero mean and covariance approximately equal to the right-hand side of Eq. \eqref{eqn:cov_vn} for $p=1$.

Let us now consider the situation in which $\bm{\theta}$ is multidimensional -- as would be the case when estimating the amplitude, mean, and standard deviation of a Gaussian astrophysical kernel (see Sec. \ref{sec:application}, Eq. \eqref{eq:A_z_f}). 
Similarly to the one-dimensional case, the goal is to identify an operator, hereafter denoted by $\bm{U}^T_p$, such the resulting K2-transformed vector of residuals $\hat{\bm{e}}=\bm{U}^T_p \residuals$ is asymptotically equal to a projection of the vector of K2-transformed errors $\bm{e} =\bm{U}^T_p \errors$, i.e.,
\begin{equation}
\begin{aligned}
\label{eqn:K2errors}
\hat{\bm{e}} &= \bm{U}^T_p \errors - \sum_{j=1}^p \bm{U}^T_p \bm{\mu}_{\bmtheta_j} \bm{\mu}_{\bmtheta_j} ^{T}\errors + o_p(1)\\
 &=\bm{e}-\sum_{j=1}^p\bm{r}_j\bm{r}_j^T\bm{e}  + o_p(1).
\end{aligned}
\end{equation}
To serve this purpose, we aim to construct $\bm{U}^T_p$ so that it enables mapping the vectors $\{\bm{\mu}_{\bmtheta_j}\}_{j=1}^p$ to the arbitrarily chosen orthonormal set $\{\bm{r}_j\}_{j=1}^p$.
A -- somewhat naive -- first attempt, could be that of combining operators of the form $\bm{U}_{\bm{\mu}_{\bmtheta_{j}},\bm{r}_{j}}$, constructed as in Eq. \eqref{Umu1r1}. Such an approach, however,  does not yield the desired result. To see this, consider 
\begin{equation}
\bm{U}_{\bm{\mu}_{\bmtheta_{1}},\bm{r}_{1}}\bm{U}_{\bm{\mu}_{\bmtheta_{2}},\bm{r}_{2}}\bm{\mu}_{\bmtheta_{2}}=\bm{U}_{\bm{\mu}_{\bmtheta_{1}},\bm{r}_{1}}\bm{r}_{2}\neq \bm{r}_{2}.
\end{equation}
To overcome this difficulty, consider a set of vectors $\{\widetilde{\bm{r}}_j\}_{j=1}^p$ constructed so that each $\widetilde{\bm{r}}_j$ is orthogonal to $\bm{\mu}_{\bmtheta_k}$, for all $k<j$.
Specifically, we choose  $\widetilde{\bm{r}}_{1} =  \bm{r}_{1}$
and 
\begin{equation}
    \begin{aligned}
    \label{eqn:tilde_rj}
\widetilde{\bm{r}}_{j} =\bm{U}_{j-1} \bm{r}_{j}
\end{aligned}
\end{equation}
for $j=2, \ldots, p$,   with $\bm{U}_j$ denoting the product operator:
\begin{equation}
    \begin{aligned}
    \label{eqn:Uj}\bm{U}_j=\bm{U}_{\bm{\mu}_{\bmtheta_{j}}, \widetilde{\bm{r}}_{j}}  \ldots  \bm{U}_{\bm{\mu}_{\bmtheta_1}, \widetilde{\bm{r}}_{1}}\end{aligned}\end{equation}
with each operator $\bm{U}_{\bm{\mu}_{\bmtheta_{j}}, \widetilde{\bm{r}}_{j}}$ acting on everything on its right.
Since the operator $\bm{U}_j$ involves a product of unitary operators, it is also unitary. Its adjoint operator is given by its transpose:
\begin{equation}
    \begin{aligned}
    \label{eqn:Ujstar}
\bm{U}^T_j = \bm{U}_{\bm{\mu}_{\bmtheta_1},\widetilde{\bm{r}}_{1}}\ldots  \bm{U}_{\bm{\mu}_{\bmtheta_{j}}, \widetilde{\bm{r}}_{j}}.
\end{aligned}
\end{equation}

One can verify that the vectors $(\widetilde{\bm{r}}_j)_{j=1}^p$ are orthogonal to $\bm{\mu}_{\bmtheta_k}$, for all $k<j$  by noticing that
\begin{equation}
    \begin{aligned}
    \langle \widetilde{\bm{r}}_{2}, \bm{\mu}_{\bmtheta_1} \rangle &= \langle \bm{r}_2, U_{\bm{\mu}_{\bmtheta_1}, \widetilde{\bm{r}}_1} \bm{\mu}_{\bmtheta_1}\rangle = \langle \bm{r}_2, \bm{r}_1\rangle = 0,\\
     \langle \widetilde{\bm{r}}_{3}, \bm{\mu}_{\bmtheta_1} \rangle &= \langle U_{\bm{\mu}_{\bmtheta_1}, \widetilde{\bm{r}}_1} \bm{r}_{3}, U_{\bm{\mu}_{\bmtheta_2}, \widetilde{\bm{r}}_2} \bm{\mu}_{\bmtheta_1} \rangle= \langle U_{\bm{\mu}_{\bmtheta_1}, \widetilde{\bm{r}}_1} \bm{r}_{3},\bm{\mu}_{\bmtheta_1}\rangle\\
     &= \langle  \bm{r}_{3},U_{\bm{\mu}_{\bmtheta_1}, \widetilde{\bm{r}}_1}\bm{\mu}_{\bmtheta_1}\rangle
     =\langle  \bm{r}_{3}, r_1\rangle=0,\\
    \langle \widetilde{\bm{r}}_{3}, \bm{\mu}_{\bmtheta_2} \rangle &= \langle U_{\bm{\mu}_{\bmtheta_1}, \widetilde{\bm{r}}_1} \bm{r}_{3}, U_{\bm{\mu}_{\bmtheta_2}, \widetilde{\bm{r}}_2} \bm{\mu}_{\bmtheta_2} \rangle= \langle U_{\bm{\mu}_{\bmtheta_1}, \widetilde{\bm{r}}_1} \bm{r}_{3}, \widetilde{\bm{r}}_2 \rangle\\
    &= \langle  \bm{r}_{3},   \bm{r}_{2}\rangle=0,
\end{aligned}
\end{equation}
and proceeding by induction.

The operator $\bm{U}_p$ defined in Eq. \eqref{eqn:Uj}, for $j=p$,
 maps vectors $\bm{r}_j$ to  $\bm{\mu}_{\bmtheta_j} $ for all $j=1,\ldots,p,$. Specifically,
\begin{equation}
\begin{aligned}
    \bm{U}_p \bm{r}_j &= U_{\bm{\mu}_{\bmtheta_p}, \bm{\widetilde{r}}_p}  \cdots  U_{\bm{\mu}_{\bmtheta_1}, \widetilde{\bm{r}}_1}\bm{r}_j \\
    &= U_{\bm{\mu}_{\bmtheta_p}, \bm{\widetilde{r}}_p}  \cdots  U_{\bm{\mu}_{\bmtheta_j}, \widetilde{\bm{r}}_j}\widetilde{\bm{r}}_j \\ 
    &= U_{\bm{\mu}_{\bmtheta_p}, \bm{\widetilde{r}}_p}  \cdots  U_{\bm{\mu}_{\bmtheta_{j+1}}, \widetilde{\bm{r}}_{j+1}} \bm{\mu}_{\bmtheta_j} = \bm{\mu}_{\bmtheta_j},
\end{aligned}
\end{equation}
where the second equality follows from the definition of $\widetilde{\bm{r}}_{j}$ in Eq. \eqref{eqn:tilde_rj} and the third equality follows from the fact that each operator $U_{\bm{\mu}_{\bmtheta_j},\widetilde{\bm{r}}_{j}}$ maps $\widetilde{\bm{r}}_{j}$ to $\bm{\mu}_{\bmtheta_j}$. The fourth equality holds since $\bm{\mu}_{\bmtheta_j}$ is orthogonal to $\bm{\mu}_{\bmtheta_{j'}}$, with $j'\neq j$ and to all $\widetilde{\bm{r}}_{j'}$ with $j'> j$, and the operators $U_{\bm{\mu}_{\bmtheta_j},\widetilde{\bm{r}}_{j}}$ keep the vectors that are orthogonal to both $\bm{\mu}_{\bmtheta_j}$ and $\widetilde{\bm{r}}_{j}$ unchanged. 
Moreover, since
$\bm{\mu}_{\bmtheta_j} = \bm{U}_p \bm{r}_j $ then
\begin{equation}
    \bm{U}_p^{T} \bm{\mu}_{\bmtheta_j}  = \bm{U}_p^{T} \bm{U}_p \bm{r}_j = \bm{r}_j; 
\end{equation}
hence, $\bm{U}_p^{T}$ maps the vectors $\bm{\mu}_{\bmtheta_j} $ to $\bm{r}_j$ for all $j=1,\ldots,p$, thereby providing the desired transformation of the residuals  in Eq. \eqref{eqn:K2errors} leading to the process  
\begin{align}
    \label{transformed_process_pp}
   v_{k,N} &= \frac{1}{\sqrt{N}}\hat{\bm{e}}^{T}\mathbb{I}_{k,N} \\
   \notag
   &= \frac{1}{\sqrt{N}}\bm{e}^{T} \Bigl(\bm{I}_N-\sum_{j=1}^p\bm{r}_{j} \bm{r}_{j} ^{T}\Bigl) \mathbb{I}_{k,N} +o_p(1).
\end{align}
Such a process generalizes Eq. \eqref{transformed_process_p} to the multidimensional setting. Under $H_0$, the process $v_{k,N}$ in Eq. \eqref{transformed_process_pp} converges to a projected Brownian motion with covariance approximately equal to the right-hand side of Eq. \eqref{eqn:cov_vn}.



The limiting null distributions of the test statistics $\widetilde{K}_N$ and $\widetilde{C}_N$ in Eq. \eqref{eqn:TransTS} can be easily derived numerically. Specifically, let $\{\boldsymbol{e}^{(b)}\}_{b=1}^B$ be a set of $B$ vectors of errors simulated from a standard multivariate normal. Realizations of the limiting process of $v_{k,N}$  can be obtained by calculating
\begin{equation}
\label{eqn:limit_process_transformed}
   v_{k,N}^{(b)}= \frac{1}{\sqrt{N}}\bigl[\bm{e}^{(b)}-\sum_{j=1}^p\bm{r}_j\bm{r}_j^T\bm{e}^{(b)}\bigl]^{T}\mathbb{I}_{k,N},
\end{equation}
for $b=1,\dots,B$. Each $v_{k,N}^{(b)}$ is 
a realization of a mean-zero Gaussian process with covariance as in Eq. \eqref{eqn:cov_vn}. Thus, for sufficiently large $N$, it converges to the same projected  Brownian motion as $v_{k,N}$. By replacing $v_{k,N}$ with $v_{k,N}^{(b)}$ in Eq. \eqref{eqn:TransTS}, we can obtain $B$ replicates of the test statistics, denoted as $\widetilde{K}^{(b)}_{N}$, respectively. Therefore, letting $\widetilde{K}^{\text{obs}}_{N}$ and $\widetilde{C}^{\text{obs}}_{N}$
be their values obtained on the observed data, the corresponding p-values are:
\begin{equation}
 \frac{1+\sum_{b=1}^B \mathbbm{1}_{\bigl\{\widetilde{K}_{N}^{(b)} \geq \widetilde{K}^{\text{obs}}_{N}\bigl\}}}{1+B}\quad\text{and}\quad \frac{1+\sum_{b=1}^B \mathbbm{1}_{\bigl\{\widetilde{C}_{N}^{(b)} \geq \widetilde{C}^{\text{obs}}_{N}\bigl\}}}{1+B}
 \label{eqn:pval}
\end{equation}
with $\mathbbm{1}$ denoting the indicator function.

\subsection{\label{simulations} Simulation Studies}
\subsubsection{Distribution-free Property}

To demonstrate the validity of the distribution-free property of the method outlined in Section~\ref{sec:III.2.2}, let us revisit the example introduced in Section~\ref{sec:inference}. Under the same setup, we simulate the null distributions of the statistics $\widetilde{K}_N$ and $\widetilde{C}_N$ in Eq.~\eqref{eqn:TransTS} 
with $v_{k,N}$ constructed as in Eq. \eqref{transformed_process_pp} for each of the four combinations of models and error structures
given in Eqs.~\eqref{eqn:simulated_true_mean}-\eqref{eqn:simulated_true_err}. {\color{black}We adopt the orthonormal basis $\{\bm{r}_j\}_{j=1}^p$ constructed following \citet{khm21}, where 
$\boldsymbol{r}_1 = (1/\sqrt{N},\dots,1/\sqrt{N})^T$ 
and $\boldsymbol{r}_2$ are the vector of elements
\begin{equation}
\label{eqn:orthbasis}
\bm{r}_{2,n}=\sqrt{\frac{12N}{N^2-1}}\left(\frac{n}{N}-\frac{N+1}{2 N}\right), \ \ n=1,\ldots, N.
\end{equation} 
The remaining vectors $\bm{r}_3, \ldots, \bm{r}_p$ are obtained by applying the Gram–Schmidt procedure to successive powers of $\bm{r}_2$. This orthonormal basis are also adopted for the following simulation studies and the data analysis.}

For the sake of comparison, we also simulate the null distribution of $\widetilde{K}_N$ and $\widetilde{C}_N$ using the Monte Carlo procedure described at the end of Section~\ref{sec:III.2.2} and for which the errors are generated from a standard Gaussian. Figure~\ref{Fig:2} shows that regardless of the error structure or model considered, the graphs of the simulated null distributions overlap for both the Kolmogorov-Smirnov and the Cramér-Von Mises statistics. {\color{black}In contrast to Figure~\ref{Fig:1}, where the curves for M1 and M2 differ because the classical processes $w_{k,N}$ converge to Brownian motions with a common mean but model-dependent variance, the transformed processes $v_{k,N}$ share the same mean and variance across models, leading to overlapping null distribution of the test statistics considered.} This confirms that the proposed tests are, for sufficiently large $N$, distribution-free, and the proposed Monte Carlo scheme accurately approximates the limiting null distributions of the test statistics based on $v_{k,N}$. The  time required by the latter amounts to 3.74 seconds. 
{\color{black}All computations were performed on a MacBook Pro (16-inch, 2019) equipped with a 2.3 GHz 8-core Intel Core i9 processor, 16 GB of 2667 MHz DDR4 memory, and an Intel UHD Graphics 630 (1536 MB) card.}


\begin{figure}
\includegraphics[scale=0.38]{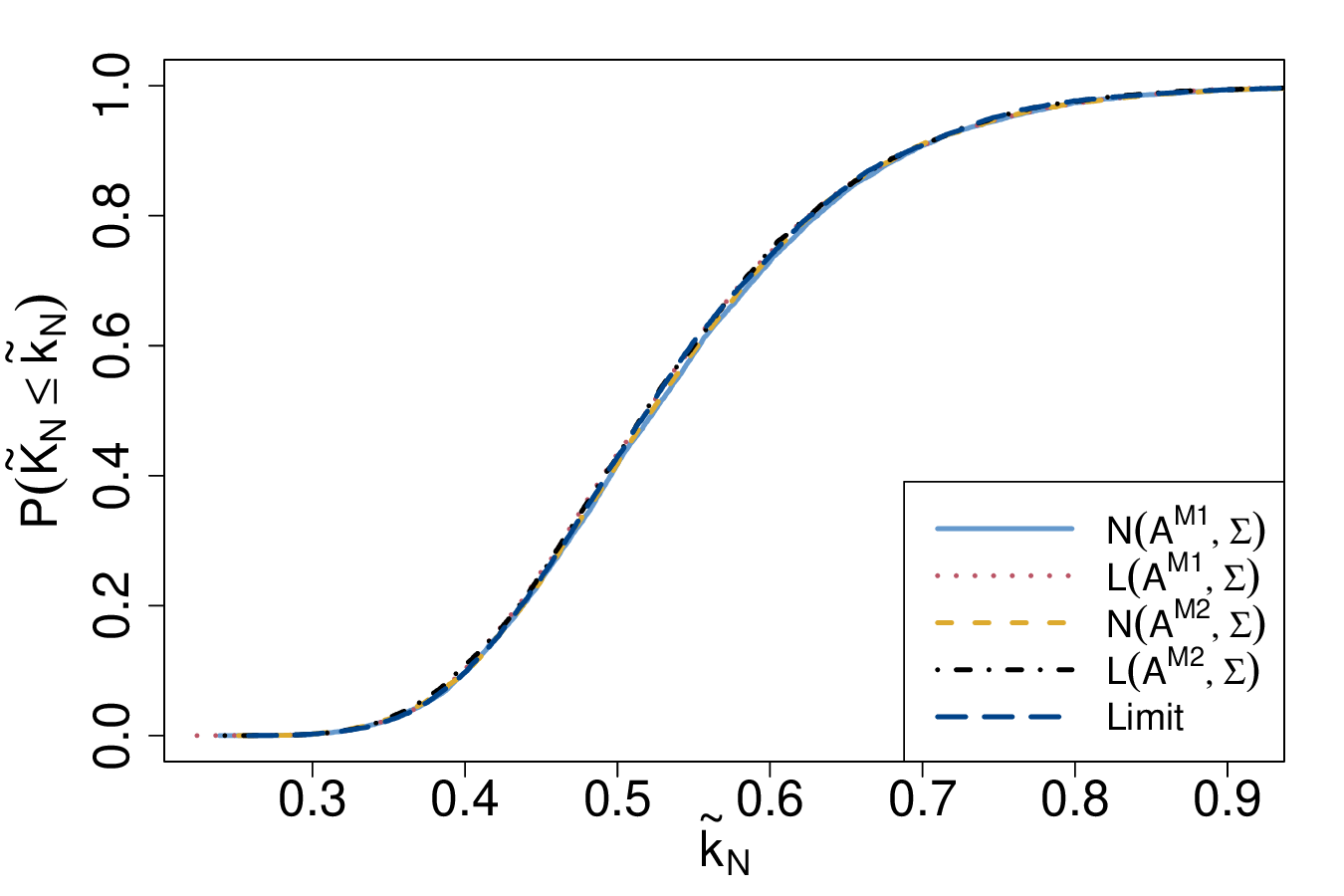}
\includegraphics[scale=0.38]{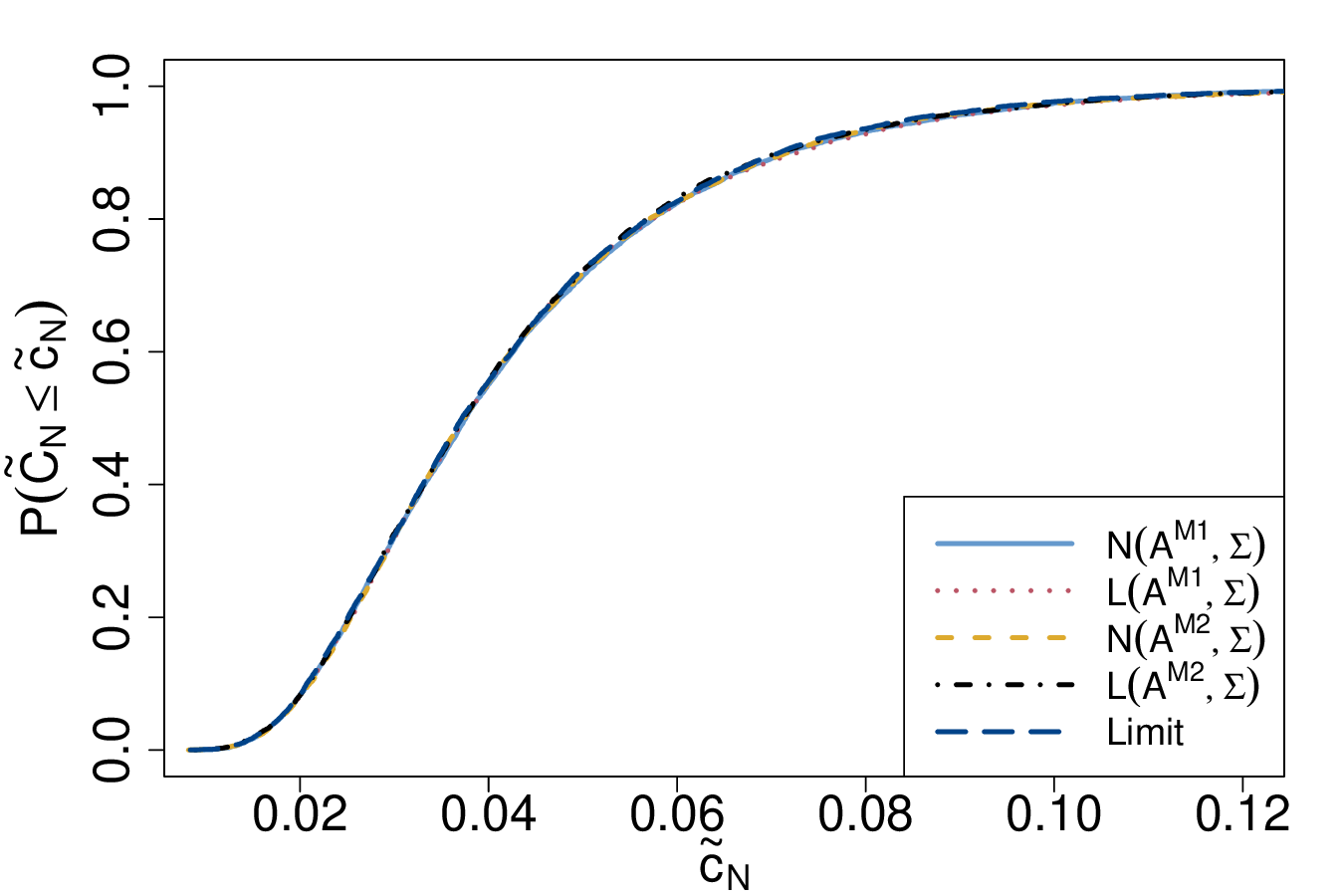}
\caption{Graphs of the simulated null cumulative distribution functions of the Kolmogorov-Smirnov (top) and the Cramér-Von Mises (bottom) statistics in Eq.~\eqref{eqn:TransTS} for the four combinations of models and error structures in Eqs.~\eqref{eqn:simulated_true_mean}-\eqref{eqn:simulated_true_err}, alongside their limiting null distribution. 
}
\label{Fig:2}
\end{figure}

\subsubsection{Statistical properties \label{sec:Power}}

The simulation studies that follow aim to assess the statistical properties of tests based on the statistics $\widetilde{K}_N$ and $\widetilde{C}_N$ in Eq.~\eqref{eqn:TransTS}.

To ensure that the probability of a false rejection\footnote{\textcolor{black}{A false rejection means the statistical test concludes that the hypothesized model does not fit the data well  even though the model is correct.}} -- known in statistics as the probability of Type I error -- does not exceed the pre-determined significance level, we generate  $10^4$ datasets from the additive combination of the mean function $\bm{A}^{M1}_{\fb,\ell}(\bmtheta)$ and the Gaussian error $\bm{u}^{(1)}_{\fb,s} $ in Eqs.~\eqref{eqn:simulated_true_mean}-\eqref{eqn:simulated_true_err}. We test the validity of $\bm{A}^{M1}_{\fb,\ell}(\bmtheta)$ by means of the statistics $\widetilde{K}_N$ and $\widetilde{C}_N$. Table~\ref{tab:table1} shows the simulated probabilities of rejecting $\bm{A}^{M1}_{\fb,\ell}(\bmtheta)$ when choosing different significance levels. The results indicate that the simulated Type I error probabilities closely match the nominal levels.

\begin{table}[t!]
\caption{Probability of Type I error for $\widetilde{K}_N$ and $\widetilde{C}_N$ at various significant levels}
\label{tab:table1}
\begin{ruledtabular}
\begin{tabular}{ccccc}
 $\alpha$ & $0.001$ & $0.01$ & $0.05$ & $0.1$ \\
\hline
\rule{0pt}{1.1em} $\widetilde{K}_N$ &  0.0009 & 0.008 & 0.046 & 0.093 \\
$\widetilde{C}_N $& 0.0007 & 0.009 & 0.047 & 0.092\\
\end{tabular}
\end{ruledtabular}
\end{table}

Next, we investigate the statistical power of the proposed statistics -- that is, the probability of correctly rejecting the proposed model when misspecified. In our simulation, we test 
$\bm{A}^{M1}_{\fb,\ell}(\bmtheta)$ while generating data using three different models for the mean. Specifically, we consider
\begin{equation}
\begin{aligned}
    \label{eqn:power_mean}
G_1: {\langle\Ahat_{f}\rangle}_\ell & =  (4 - 3 \ell + 0.4 \ell^2 + 0.02 \ell^3) \bar{\fb}^{(2/3)}; \\
G_2: {\langle\Ahat_{f}\rangle}_\ell & =  \left(4 -3 \exp(0.23\ell)\right) \bar{\fb}^{(2/3)}; \\
G_3: {\langle\Ahat_{f}\rangle}_\ell & =  (4-3 \ell^{2.07}) \bar{\fb}^{(2/3)}.
\end{aligned}
\end{equation}
A Monte Carlo simulation involving $10^4$ replicates is performed for each of the models in Eq. \eqref{eqn:power_mean} and combined with the Gaussian error structure in Eq. \eqref{eqn:simulated_true_err}. The simulated power of the Kolmogorov--Smirnov and Cramér--von Mises test statistics are reported in Tables~\ref{tab:table3_ks_power} and \ref{tab:table3_cvm_power}, respectively, for different significance levels $\alpha$. For both test statistics, the simulated statistical power is close to one when the significance level is 0.05 and \textcolor{black}{can be as high as $0.776$ even at smaller significance levels $\alpha=0.001$}.

\begin{table}[t!]
\caption{Statistical power of $\widetilde{K}_N$ when testing $\bm{A}^{M1}_{\fb,\ell}(\bmtheta)$ and generating data from $G_1$, $G_2,$ $G_3$ at various significant levels}
\label{tab:table3_ks_power}
\begin{ruledtabular}
\begin{tabular}{cccccc}
 $\alpha$ & $0.001$ & $0.01$ & $0.05$ & $0.1$ & \\
\hline
$G_1$  & 0.639 & 0.957 & 1.000 & 1.000 &\\
$G_2$ & 0.407 & 0.817 & 0.981 & 0.998 &\\
$G_3$ & 0.776 & 0.989 & 1.000 & 1.000 &\\
\end{tabular}
\end{ruledtabular}
\end{table}

\begin{table}[t!]
\caption{Statistical power of $\widetilde{C}_N$ when testing $\bm{A}^{M1}_{\fb,\ell}(\bmtheta)$ and generating data from $G_1$, $G_2,$ $G_3$ at various significant levels}
\label{tab:table3_cvm_power}
\begin{ruledtabular}
\begin{tabular}{cccccc}
 $\alpha$ & $0.001$ & $0.01$ & $0.05$ & $0.1$ & \\
\hline
$G_1$  & 0.232 & 0.588 & 0.895 & 0.971 &\\
$G_2$ & 0.105 & 0.364 & 0.730 & 0.881&\\
$G_3$ & 0.397    & 0.753 & 0.957 & 0.991& \\
\end{tabular}
\end{ruledtabular}
\end{table}

\section{\label{sec:application} Testing spectral models for the SGWB}
We now apply the methodology described in Sec. \ref{sec:III} to the case of a realistic astrophysical SGWB model and the data obtained by the Advanced LIGO and Advanced Virgo GW detectors in their third observing run (O3). \textcolor{black}{O3 includes data from both Advanced LIGO detectors as well as the Advanced Virgo detector, collected from April 1, 2019, to March 27, 2020, with a month-long pause in October 2019. Aside from this pause, the duty cycles of the three detectors were 77\%, 75\%, and 76\% for LIGO Livingston, LIGO Hanford, and Virgo, respectively \cite{Davis_2021}. For this study, we subdivide the O3 data set into $S=$15 segments, each corresponding to approximately 10 days of detector live time. This choice was made to ensure that each part of the sidereal day included data in each of the 15 segments. We also divide the frequency range 20-180 Hz into $B=8$ discrete frequency bins, each 20 Hz wide. We assume the same time-domain and frequency-domain data quality cuts that were applied during other O3 anisotropic SGWB search analyses \cite{O3stochdir}.}  

Let us turn our attention to the theoretical model we wish to test. The energy density of the stochastic gravitational-wave background is defined as

\begin{align}
	    \Omega_{\text{GW}}(\hat\Omega,f)=\frac{f}{\rho_c}\frac{d^3 \rho_{\text{GW}}}{d^2 \hat\Omega df}(\hat\Omega,f)=\frac{\bar{\Omega}_{\text{GW}}(f)}{4\pi}+\delta \Omega_{\text{GW}}(\hat\Omega,f),
\end{align}
where $\rho_{\text{GW}}$ is the GW energy density, $f$ is a continuous variable for GW frequency, and $\rho_c$ is the critical density of the universe today. We use $f$ to denote frequency as a continuous variable in contrast to $\fb$, which represents a discrete frequency band with width 20Hz and midpoint $\bar{\fb}$. The energy density is split into an isotropic component $\bar{\Omega}_{\text{GW}}$ and anisotropic perturbations $\delta \Omega_{\text{GW}}$ \cite{Cusin:2017mjm, Cusin:2019jpv}:
\begin{equation}
    \begin{aligned}
	    &\bar{\Omega}_{\text{GW}}(f) = \int \mathrm{d}r \, \partial_r \bar{\Omega}_{\text{GW}}(f,r),\\
	    &\partial_r \bar{\Omega}_{\text{GW}}(f,r)= \frac{f}{\rho_c}\,\mathcal{B}(f,r;\bmtheta)\,,
     \label{BackandPert}
\end{aligned}
\end{equation}
where $r$ denotes the conformal distance
and the function $\mathcal{B}(f,r;\bmtheta)$ is an astrophysical kernel with free parameters $\bmtheta$ that contains information on the local production of GWs at galaxy scales (we set the speed of light $c=1$). Following~\cite{Cusin:2017fwz}, the angular power spectrum is then given by
\begin{eqnarray}
\label{eq:modelCl}
 A_{\ell}(f; \bmtheta) & = & \frac{2}{\pi} \int dk k^2 \, |\delta \Omega_{\text{GW}\,,{\ell}}(k,f; \bmtheta)|^2 ,\hbox{ and}\\
\delta \Omega_{\text{GW}\,,\ell}(k,f; \theta) & = & \frac{f}{4\pi \rho_c} \int dr \, \mathcal{B}(f,r; \bmtheta) \, \big[b(r)\,\delta_{m,k}(r)j_{\ell}(k r) \big], \nonumber \\
\end{eqnarray}
where $k$ is the wavenumber, $j_{\ell}$ are spherical Bessel functions, and $\delta_{m,k}$ is the dark-matter over-density, related to galaxy overdensity via the bias factor that we assume to be scale-independent and with redshift evolution given by $b(z)=b_0 \sqrt{1+z}$, where $b_0=1.5$ if we replace the conformal distance $r$ with redshift $z$ \cite{yang2023}. Note that  Eq.~\eqref{eq:modelCl} is the model to be tested. 
The astrophysical kernel can be computed from the galaxy distribution and galaxy GW luminosity; for our purposes, we also replace the conformal distance $r$ with redshift $z$ and adopt the empirical parameterization \cite{Cusin:2017mjm, Cusin:2019jpv, yang2023}
\begin{align}\label{eq:A_z_f}
\mathcal{B}(f, z; \bmtheta) = 
A_{\text{\rm max}} \, f^{-1/3} \, e^{-(z-z_c)^2/2\sigma_z^2}\,,
\end{align}
with the free parameters $\theta = (A_{\text{\rm max}}, z_c,\sigma_z)$.

A direct detection of the SGWB has not yet been achieved, so the data from O3 are consistent with noise \cite{O3stochdir}. Therefore, we are unable to use O3 data with the method developed in Sec. \ref{sec:III} to validate the theoretical model for the angular power spectrum. Instead, we add a simulated signal (generated from the theoretical model itself) to the noise-only O3 data, and then apply our distribution-free methodology to these simulated data. The addition of signal to noise must be done in terms of maps of GW power rather than the angular power spectrum, since the latter physically represents \textit{squared} power rather than simply power. We may then represent the sum of noise and the injected signal as
	\begin{align}\label{eq:alm_inj}
		\hat{a}_{\fb,s, \ell m}^{\rm inj}=\hat{a}_{\fb,s, \ell m} + {a}_{\fb, \ell m}^M
	\end{align}
where the first term on the right-hand side of Eq. \eqref{eq:alm_inj} is given by Eq. \eqref{eq:p0} while the index
$f_b$ and $s$ correspond to the discrete frequency bands and the segments used to divide the Advanced LIGO-Virgo data, both of which are as described in Sec. \ref{sec:GW}. The number of data segments $S=15$ was chosen to maximize the number of samples to provide our statistical framework while also ensuring that each segment includes data from each part of the sidereal day. The second term on the right-hand side of Eq. \eqref{eq:alm_inj} is derived from the theoretical model and denoted with the superscript $M$. Note that the theoretical model does not directly provide values of ${{a}}_{\fb,\ell m}^M$; nevertheless, we can draw a set of ${a}_{\fb,\ell m}^M$ using the given model ${A}_\ell(f;\bmtheta)$. More specifically,
a given map element ${{a}}_{\fb,\ell m}^M$ is complex-valued. We assume its real and imaginary parts are independent random variables, each drawn from the multivariate normal distribution $\mathcal{N} (0,\frac{1}{2}A_\ell (f;\bmtheta))$. In the case when $m=0$, each ${{a}}_{\fb,\ell 0}^M$ is purely real and drawn instead from $\mathcal{N} (0,A_\ell (f;\bmtheta))$. The frequency values at which we evaluate the theoretical model $A_\ell (f;\bmtheta)$ are taken to be the midpoints, $\bar{\fb}$, of the frequency bands considered. 

We note that we sample the map only once and must treat that sample as though it is the only one provided by our realization of the universe. To make this sample map consistent with the angular power spectrum from which it was drawn, we introduce a scaling factor ${\tau}_{\fb,\ell}$:
\begin{align}\label{eq:r}
    \tau_{\fb,\ell} = \sqrt{ \frac{A_\ell(\bar{\fb};\bmtheta)}{\frac{1}{2 \ell + 1}\sum^{\ell}_{m=-\ell}|a^M_{\fb,\ell m}|^2}}.
\end{align}
We apply the scaling factor to the drawn map elements such that they satisfy the condition below:
\begin{align}\label{eq:A_ell}
		A_\ell (\bar{\fb};\bmtheta)=\frac{1}{1+2\ell}\sum_{m=-\ell}^{\ell} |\tau_{\fb,\ell}{a}^M_{\fb,\ell m}|^2.
	\end{align}
This coincides with the hypothesis testing problem \eqref{eqn:test} in Sec. \ref{sec:III}.

We follow \cite{yang2023} and apply the scaling factor $\mathcal{K}(f)$ to properly scale the LIGO results $\hat{a}_{\fb,\ell m}$ relative to the values provided by the theoretical model ${a}_{\fb,\ell m}^M$:
	\begin{align}\label{eq:K}
		\mathcal{K}(f)=\frac{2 \pi^2}{3 H_0^2} f_{\rm ref}^3 \left(\frac{f}{f_{\rm ref}}\right)^\alpha
	\end{align}
Here, $f_{\rm ref}$ is the reference frequency of 25Hz used in O3 \cite{O3stochdir}, and $f$ again takes on values of the midpoints of each frequency band $\bar{\fb}$. We also apply the square of this scaling factor to the inverted Fisher matrix in order to obtain a properly scaled covariance matrix for the $A_\ell$ (Eq. \eqref{eqn:cov_text}). For notational simplicity, for the remainder of this section, when we write ${a}_{\fb,\ell m}^M$, we assume it to be the version scaled by $\tau_{\fb,\ell}$, and when we write $\hat{a}_{\fb,s,\ell m}$ or $(\Gamma^{-1}_{\fb,s})_{\ell m, \ell'm'}$, we assume them to be the versions scaled by $\mathcal{K}(f)$ and $\mathcal{K}(f)^2$ respectively.

With the model map elements ${{a}}_{\fb,\ell m}^M$ and data map elements $\hat{{a}}_{\fb,s,\ell m}$ properly scaled, we may find their sum ${\hat{a}}_{\fb,s,\ell m}^{\rm inj}$ to complete the signal injection. The corresponding angular power spectrum ${\hat{A}}_{\fb,s,\ell}^{\rm inj}$ is calculated according to Eq. \eqref{eq:C_ell},
\begin{align}\label{eq:A_inj}
        \hat{A}^{\rm inj}_{\fb,s,\ell}=\frac{1}{1+2\ell}\sum_{m=-\ell}^{\ell} \left[|\hat{a}^{\rm inj}_{\fb,s,\ell m}|^2 - (\Gamma^{-1}_{\fb,s})_{\ell m,\ell m}\right],
    \end{align}
and the covariance matrix is computed according to Eq. \eqref{eqn:cov_text}, with ${a}_{\fb,\ell m}$ replaced by $\hat{a}^{\rm inj}_{\fb,\ell m}$, which is defined as the average of $\hat{a}^{\rm inj}_{\fb,s,\ell m}$ over the data segments $s$.


The necessary inputs for the procedure described in Sec. \ref{sec:III} are the data ${\hat{A}}_{\fb,s,\ell}^{\rm inj}$, their corresponding covariance matrices, and model values ${A}_\ell(\bar{\fb};\bmtheta)$. In this study, we test our procedure on two variations of these quantities. 

We compare the model described by Eqs.~\eqref{eq:modelCl}-\eqref{eq:A_z_f} to data injected using the same model, applying the covariance matrix as described above. When generating the injected signal, the parameter values are  $1.2 \times 10^{-28}$, $0.6$, and $0.7$ for $A_{max}$, $z_c$, and $\sigma_z$ respectively. The parameter estimates obtained via Eq~\eqref{optim} are $1.19 \times 10^{-28}$, $0.590$, and $0.694$, {\color{black}with corresponding standard errors $5.30 \times 10^{-31}$, $0.014$, and $0.012$.} Next, we use the same covariance matrix and again use a postulated model as given in Eq. \eqref{eq:A_z_f}, but in this case we use data which has been injected with a different variation of the model. In particular, we replace the astrophysical kernel with 
\begin{align}\label{eq:A_z_f_linear}
\mathcal{B}(f, z; \bmtheta) = f^{-1/3}(\beta_0+\beta_1z),
\end{align}
where $\bmtheta = (\beta_0, \beta_1)$. For this alternative model, we set $\beta_0 = 1.2 \times 10^{-28}$ and $\beta_1=6.0 \times 10^{-29}$ when generating the injected data. The parameter estimates for the postulated model are then $2.84 \times 10^{-28}$, $3.58$, and $2.72$ for $A_{max}$, $z_c$, and $\sigma_z$, respectively, {\color{black}with corresponding standard errors $8.29 \times 10^{-30}$, $0.138$, and $0.060$.}


To perform the test in Eq. \eqref{eqn:test}, we employ the test statistics $\widetilde{K}_N$ and $\widetilde{C}_N$ in Eq.~\eqref{eqn:TransTS}. We simulate their limiting distributions by running $10^5$ Monte Carlo simulations of the limiting process in Eq.~\eqref{eqn:limit_process_transformed}. The final results align with our expectations. Specifically, under the first scenario, $\widetilde{K}_N$ and $\widetilde{C}_N$ lead to p-values of 0.111 and 0.264, respectively, implying that the model is correctly identified as consistent with the data. Under the second scenario, however, the test statistics yield p-values 0.002 and 0.008, respectively, and correctly identify the model as not being consistent with the data. 

\section{Summary and Discussion}
This paper extends the distribution-free statistical framework presented in a companion paper \cite{PRL} to account for the unique challenges arising when estimating and testing models for the angular power spectrum of the SGWB. 

Framing the problem in the context of regression avoids the need to specify a likelihood function for the angular power spectrum. The procedure only requires a model for its mean and covariance matrix. To ensure consistency in our analysis, a new closed-form expression for the latter is derived to account for the variability associated with the maximum likelihood estimation of the spherical harmonic coefficients of the SGWB sky map. 

Simulation studies and an application to data from the third observing run of Advanced LIGO and Advanced Virgo demonstrated that the proposed testing procedure can detect departures from the hypothesized angular power spectral models when misspecified while adequately controlling for the probability of false discoveries. We note that the current sensitivity of the Advanced LIGO and Advanced Virgo detectors has not allowed for direct detection of the SGWB. To circumvent this limitation, we have used the O3 dataset as a noise realization, and a simulated signal has been injected into it. Given the good performance in such a setting, we expect the proposed framework to be valuable when applied to data collected by more sensitive detectors.

\textcolor{black}{In an astrophysical SGWB comprised of CBC events, the temporal discreteness of the events gives rise to a \textit{shot noise} which dominates the background signal \cite{Jenkins_2019_2, Jenkins_2019_shot}. In the application of our statistical framework to O3 data, we do not account for this noise, but alternative estimation methods for the angular power spectrum have been developed to obtain estimates that are unbiased by temporal shot noise \cite{Jenkins_2019_shot, Kouvatsos:2023bgd}. These methods involve subdividing the data set into separate segments as we have done in this study. In our framework, we perform this subdivision to obtain multiple sets of estimators to increase our sample size, but the aforementioned unbiased methods require this subdivision to obtain only a single set of estimators. It may therefore be difficult to account for the temporal shot noise while also satisfying our statistical framework's sample size requirements, and future work should be done to find a balance between these two factors.}

Finally, we note that inconsistencies in the statistical formalism can be found in the literature concerning the statistical analysis of the SGWB. In particular, when the spherical harmonic coefficients of the SGWB sky map are estimated via maximum likelihood \cite[cf.][]{Thrane:2009fp} the user implicitly assumes such quantities -- as well as the corresponding sky maps--  are deterministic. Therefore, they should not be simultaneously treated as zero-mean complex Gaussian random variables. A statistical analysis that enables the estimation and testing of angular power spectral models while also allowing the spherical harmonic coefficients to be random can be established. It requires, however, a revision of the modeling framework at a fundamental level. While this is beyond the scope of this manuscript, it is the subject of ongoing work.

\begin{acknowledgments}
SA and XZ are grateful for the financial support provided
by the Office of the Vice President for Research \& Innovation at the University of Minnesota. XZ was partially supported by the University of Minnesota Data Science Initiative with funding made available by the MnDrive initiative. SA, GJ, and VM were partially supported by NSF grant DMS-2152746.
The work of EF and VM was in part supported by NSF grants PHY-2110238 and PHY-1806630. \textcolor{black}{HZ was supported by the IRSA Faragher Distinguished Postdoctoral Fellowship.} The authors are grateful for computational resources provided by the LIGO Laboratory and supported by NSF Grants PHY-0757058 and PHY-0823459. This material is based upon work supported by NSF's LIGO Laboratory, which is a major facility fully funded by the National Science Foundation.

This research has made use of data or software obtained from the Gravitational Wave Open Science Center (gwosc.org), a service of the LIGO Scientific Collaboration, the Virgo Collaboration, and KAGRA. This material is based upon work supported by NSF's LIGO Laboratory which is a major facility fully funded by the National Science Foundation, as well as the Science and Technology Facilities Council (STFC) of the United Kingdom, the Max-Planck-Society (MPS), and the State of Niedersachsen/Germany for support of the construction of Advanced LIGO and construction and operation of the GEO600 detector. Additional support for Advanced LIGO was provided by the Australian Research Council. Virgo is funded, through the European Gravitational Observatory (EGO), by the French Centre National de Recherche Scientifique (CNRS), the Italian Istituto Nazionale di Fisica Nucleare (INFN) and the Dutch Nikhef, with contributions by institutions from Belgium, Germany, Greece, Hungary, Ireland, Japan, Monaco, Poland, Portugal, Spain. KAGRA is supported by Ministry of Education, Culture, Sports, Science and Technology (MEXT), Japan Society for the Promotion of Science (JSPS) in Japan; National Research Foundation (NRF) and Ministry of Science and ICT (MSIT) in Korea; Academia Sinica (AS) and National Science and Technology Council (NSTC) in Taiwan.
\end{acknowledgments}

\section*{Code Availability}
The Python code used to conduct the simulations and analyses in Sections \ref{sec:inference}, \ref{simulations}, and \ref{sec:application}, along with a tutorial for implementing the distribution-free tests in Python, is available at \url{https://github.com/xiangyu2022/DisfreeTestAPS}. The O3 data sets for LIGO and Virgo may be accessed via \url{https://gwosc.org/O3} \cite{Abbott_2023_data}. All codes necessary for performing the spherical harmonics decomposition search on the O3 data set are available from the public {\tt Stochastic} repository \url{https://git.ligo.org/stochastic-public/stochastic}. Values of the theoretical models of the SGWB angular power spectrum were generated using the {\tt CLASS} \url{https://class-code.net} \cite{Diego_Blas_2011} and {\tt MontePython} \cite{Benjamin_Audren_2013, brinckmann2018montepython3boostedmcmc} packages. The \textsf{R} package \textit{distfreereg} \citep[][]{Miller2024}, available at \url{https://cran.r-project.org/package=distfreereg}, contains a general implementation of the goodness-of-fit test. A tutorial for implementing the test in \textsf{R} is available at \url{https://github.com/small-epsilon/GOF-Testing-for-Angular-Power-Spectrum-Models-in-R}.

\appendix

\section{The generalized \texorpdfstring{$\chi^2$} ~~distribution of the angular power spectrum}
\label{generalized_chi}
 
Define the index set $\mathcal{I}_\ell=\{(\ell,m)\in \mathbb{N}^+ \times \mathbb{Z}; -\ell\leq  m\leq \ell\}$ and define
\begin{equation}
    \mathcal{I}=\bigcup_{\ell=1}^{\ell_{\rm max}} \mathcal{I}_\ell.
\end{equation}
From Eqs.~\eqref{eq:p0}-\eqref{eq:fisher} the vector of elements $\hat{a}_{\fb ,\ell m}$ is a linear combination of the complex Gaussian distributed random variable $C(t;f)$ with density defined as in Eq.~\eqref{equ:gauss_density}. {To ease notation, we write $\hat{a}_{\ell m}$ to denote $\hat{a}_{\fb, \ell m}$, ignoring the influence of $\fb$.} Therefore, $\{\hat{a}_{
\ell m}\}_{\mathcal{I}}$ is distributed as a multivariate complex Gaussian random variable with mean $\{  {a}_{\ell m}\}_{\mathcal{I}}$ and covariance matrix $\boldsymbol{\Gamma}^{-1}$. That is, 
\begin{equation}
    \{\hat{a}_{\ell m}\}_{\mathcal{I} } \sim \mathcal{CN} (\{ {a}_{\ell m}\}_{\mathcal{I} }, \boldsymbol{\Gamma}^{-1}),
\end{equation}
where $\boldsymbol{\Gamma}^{-1}= \{\Gamma^{-1}_{\ell m,\ell'm'}\}_{(\ell,m),(\ell',m')\in \mathcal{I} }$. 
Define
\begin{equation}
\tilde{\chi}_\ell=\sum_{m=-\ell}^\ell |\hat{a}_{\ell m}|^2.
\end{equation}
The goal is to show that $\tilde{\chi}_\ell$ has a generalized $\chi^2$ distribution.

Let $\boldsymbol{X}$ be the random vector whose components are the real and imaginary parts of $\{\hat{a}_{\ell m}\}_{\mathcal{I}}$. Because the density of $\{\hat{a}_{\ell m}\}_{\mathcal{I}}$ is proportional to
\begin{align}
    &\exp\{- (\hat{a}_{\ell' m'}-{a}_{\ell' m'})^* \Gamma_{\ell' m',\ell m} (\hat{a}_{\ell m}-{a}_{\ell' m'})\}\\
    =&\exp\{-\tfrac{1}{2}(\boldsymbol{X}-\boldsymbol{\mu})^T \boldsymbol{\Omega}^{-1}(\boldsymbol{X}-\boldsymbol{\mu} )\},
\end{align}
where 
\begin{equation}
    \boldsymbol{X}=
    \begin{pmatrix}
        \Rel(\{\hat{a}_{\ell m} \}_{\mathcal{I}}) \\
        \Img(\{\hat{a}_{\ell m} \}_{\mathcal{I}})\\
    \end{pmatrix},\quad \boldsymbol{\mu} =\begin{pmatrix}
        \Rel(\{ {a}_{\ell m} \}_{\mathcal{I} }) \\
        \Img(\{ {a}_{\ell m} \}_{\mathcal{I} })\\
    \end{pmatrix} 
\end{equation}
and
\begin{equation}
    \boldsymbol{\Omega}^{-1}= {2}\begin{pmatrix}
\Rel(\boldsymbol{\Gamma} ) & -\Img(\boldsymbol{\Gamma} ) \\
\Img(\boldsymbol{\Gamma} ) & \Rel(\boldsymbol{\Gamma} )
\end{pmatrix},
\end{equation}

it follows that $\boldsymbol{X}\sim \mathcal{N} \left( \boldsymbol{\mu}, \boldsymbol{\Omega} \right)$, i.e., a real multivariate normal with mean $\boldsymbol{\mu}$ and covariance $\boldsymbol{\Omega}$.
Because $\boldsymbol{\Gamma}=\Rel (\boldsymbol{\Gamma})+ i\Img(\boldsymbol{\Gamma})$ and $\boldsymbol{\Gamma}^{-1}=\Rel (\boldsymbol{\Gamma}^{-1})+ i\Img(\boldsymbol{\Gamma}^{-1})$ are inverse of each other, we have 
\begin{equation}
\boldsymbol{\Omega}=\frac{1}{2}\begin{pmatrix}
\Rel(\boldsymbol{\Gamma}^{-1}) & -\Img(\boldsymbol{\Gamma}^{-1}) \\
\Img(\boldsymbol{\Gamma}^{-1}) & \Rel(\boldsymbol{\Gamma}^{-1})
\end{pmatrix} .
\end{equation}
Subvectors of $\boldsymbol{X}$ also follow a (real) multivariate Gaussian distribution. In particular,
\begin{equation}
    \boldsymbol{X}_\ell =
    \begin{pmatrix}
        \Rel(\{\hat{a}_{\ell m} \}_{\mathcal{I}_\ell}) \\
        \Img(\{\hat{a}_{\ell m} \}_{\mathcal{I}_\ell})\\
    \end{pmatrix}\sim
    \mathcal{N} \left( \boldsymbol{\mu}_\ell, \boldsymbol{\Omega}_{\ell} \right),
\end{equation}
with mean
\begin{equation}
    \boldsymbol{\mu}_\ell=\begin{pmatrix}
        \Rel(\{ {a}_{\ell m} \}_{\mathcal{I}_\ell}) \\
        \Img(\{ {a}_{\ell m} \}_{\mathcal{I}_\ell})\\
    \end{pmatrix},
\end{equation}
    and covariance matrix
    \begin{equation}
    \boldsymbol{\Omega}_{\ell}=\frac{1}{2}\begin{pmatrix}
\Rel(\boldsymbol{\Gamma}^{-1})_{\ell,\ell} & -\Img(\boldsymbol{\Gamma}^{-1})_{\ell,\ell} \\
\Img(\boldsymbol{\Gamma}^{-1})_{\ell,\ell} & \Rel(\boldsymbol{\Gamma}^{-1})_{\ell,\ell}
\end{pmatrix}.
\end{equation}
Note that $\tilde{\chi}_\ell=\|\boldsymbol{X}_\ell\|^2$.
Since $\boldsymbol{\Gamma}$ is a positive definite matrix, $(\boldsymbol{\Gamma}^{-1})_{\ell,\ell}$ is also positive definite. 
Therefore, if, for any non-zero vector $(\boldsymbol{x}^T, \boldsymbol{y}^T )^T\in \mathbb{R}^{2l+1}\times \mathbb{R}^{2l+1}$, $\boldsymbol{z}=\boldsymbol{x}+ i\boldsymbol{y}$ and let $\boldsymbol{z}^\dagger$ be its conjugate transpose, then
\begin{equation}
\begin{split}
&\frac{1}{2}(\boldsymbol{x}^T, \boldsymbol{y}^T )\begin{pmatrix}
\Rel(\boldsymbol{\Gamma}^{-1})_{\ell,\ell} & -\Img(\boldsymbol{\Gamma}^{-1})_{\ell,\ell} \\
\Img(\boldsymbol{\Gamma}^{-1})_{\ell,\ell} & \Rel(\boldsymbol{\Gamma}^{-1})_{\ell,\ell}
\end{pmatrix}
\begin{pmatrix}
    \boldsymbol{x}\\
    \boldsymbol{y}
\end{pmatrix}\\
\quad&=\frac{1}{2}\boldsymbol{z}^\dagger  (\boldsymbol{\Gamma}^{-1})_{\ell,\ell} \boldsymbol{z}>0,    
\end{split}
\end{equation}
which implies that $\boldsymbol{\Omega}_\ell$ is also positive definite. 

Consider the eigendecomposition of $\boldsymbol{\Omega}_\ell$,
\begin{equation}
\boldsymbol{\Omega}_\ell=\boldsymbol{C}_\ell\boldsymbol{\Lambda}_\ell\boldsymbol{C}_\ell^T=\sum_{i=1}^{r_\ell}\lambda_{\ell,i} \boldsymbol{C}_{\ell,i}\boldsymbol{C}_{\ell,i}^T,
\end{equation}
where $\boldsymbol{C}_\ell=[\boldsymbol{C}_{\ell,1},\cdots,\boldsymbol{C}_{\ell,r_\ell}]$ is an orthogonal matrix, and $\boldsymbol{\Lambda}_\ell=\operatorname{diag}(\lambda_{\ell,1} \bm{I}_{k_{\ell,1}},\cdots, \lambda_{\ell,r_\ell} \bm{I}_{k_{\ell,r_\ell}})$ is a diagonal matrix. Here, \( r_\ell \) represents the number of distinct eigenvalues of the matrix \( \boldsymbol{\Omega}_\ell \), and \( \{ \lambda_{\ell,1}, \ldots, \lambda_{\ell,r_\ell} \} \) is the set of these distinct eigenvalues. The values \( k_{\ell,1}, \ldots, k_{\ell,r_\ell} \) represent the algebraic multiplicities of the corresponding eigenvalues \( \lambda_{\ell,1}, \ldots, \lambda_{\ell,r_\ell} \) for the matrix \( \boldsymbol{\Omega}_\ell \).

Set 
\begin{equation}
\begin{split}
\boldsymbol{U}_\ell&= \boldsymbol{\Lambda}_\ell^{-1/2}\boldsymbol{C}_\ell^T (\boldsymbol{X}_\ell-\boldsymbol{\mu}_\ell)=\begin{pmatrix}
\boldsymbol{U}_{\ell,1} \\
\boldsymbol{U}_{\ell,2} \\
\vdots \\
\boldsymbol{U}_{\ell,r}
\end{pmatrix},
\end{split}
\end{equation}
with $\boldsymbol{U}_{\ell,i}\in \mathbb{R}^{k_{\ell,i}}$ and
\begin{equation}
\begin{split}
\boldsymbol{\mu}_\ell'&= \boldsymbol{\Lambda}_\ell^{-1/2}\boldsymbol{C}_\ell^T\boldsymbol{\mu}_\ell=\begin{pmatrix}
\boldsymbol{\mu}'_{\ell,1} \\
\boldsymbol{\mu}'_{\ell,2} \\
\vdots \\
\boldsymbol{\mu}'_{\ell,r}
\end{pmatrix},
\end{split}
\end{equation}
where 
$ \boldsymbol{\mu}'_{\ell,i}=\lambda_{\ell,i}^{-1/2} \boldsymbol{C}_{\ell,i}^T\boldsymbol{\mu}_\ell \in \mathbb{R}^{k_{\ell,i}},
$ 
for any $1\leq i\leq r_\ell$.
It is easy to verify that 
\begin{equation}
    \boldsymbol{U}_\ell\sim \mathcal{N}(\bm{0},\bm{I}_{4\ell+2})\text{ and }\boldsymbol{X}_\ell=\boldsymbol{C}_\ell\boldsymbol{\Lambda}_\ell^{1/2}(\boldsymbol{U}_\ell + \boldsymbol{\mu}'_\ell).
\end{equation}
Therefore, 
\begin{equation}
\begin{split}
    \|\boldsymbol{X}_\ell \|^2 = &\|\boldsymbol{C}_\ell \boldsymbol{\Lambda}_\ell^{1/2}(\boldsymbol{U}_\ell + \boldsymbol{\mu}'_\ell)  \|^2 \\
    =&\|  \boldsymbol{\Lambda}_\ell^{1/2}(\boldsymbol{U}_\ell + \boldsymbol{\mu}'_\ell)  \|^2 = \sum_{i=1}^{r_\ell} \lambda_{\ell,i}\| \boldsymbol{U}_{\ell,i}+\boldsymbol{\mu}_{\ell,i}'\|^2,
\end{split}
\end{equation}
and thus, due to the correspondence between $\tilde{\chi}_\ell$ and $\|\boldsymbol{X}_\ell \|^2$, it follows that $\tilde{\chi}_\ell$ and $\sum_{i=1}^{r_\ell} \lambda_{\ell,i}\| \boldsymbol{U}_{\ell,i}+\boldsymbol{\mu}_{\ell,i}' \|^2$ have the same distribution. Furthermore, $\| \boldsymbol{U}_{\ell,i}+\boldsymbol{\mu}_{\ell,i}' \|^2$ follows a noncentral chi-squared distribution, denoted by $\chi^2(\theta_{\ell,i}, k_{\ell,i})$, with  noncentrality parameter 
\begin{equation}
\theta_{\ell,i} = \|\boldsymbol{\mu}'_{\ell,i}\|^2=\lambda_{\ell,i}^{-1}\boldsymbol{\mu}_\ell^T \boldsymbol{C}_{\ell,i}\boldsymbol{C}_{\ell,i}^T\boldsymbol{\mu}_\ell
\end{equation}
and $k_{\ell,i}$ degrees of freedom.

It follows that the random variable $\tilde{\chi}_\ell$ can be represented as a sum of independent noncentral chi-squared random variables
\begin{equation}
\label{eqn:sum_chi}
    \sum_{i=1}^{r_\ell} \lambda_{\ell,i}\cdot \chi^2(\theta_{\ell,i}, k_{\ell,i}).
\end{equation}

In general, the density of Eq.~\eqref{eqn:sum_chi} does not have a closed-form expression since it involves modified Bessel functions of the first kind \citep{kotz:bala:john:2000}. 
Because integrating the modified Bessel functions in the convolution lacks a simple closed-form solution, we cannot derive the density of the generalized chi-squared distribution unless 
$r_\ell=1$, that is, when no convolution and integration are needed. This situation occurs only when
$\boldsymbol{\Omega}_{\ell}=c\bm{I}_{4\ell+2}$ for some constant $c > 0$, which is equivalent to  
$(\boldsymbol{\Gamma}^{-1})_{\ell,\ell}=c'\bm{I}_{2\ell+1}$ for some constant $c' > 0$. 

\section{The covariance of the angular power spectrum}
\label{covariance_appendix}

The goal of this section is to prove the following result about the covariance of the angular power spectrum. 
\begin{proposition} 
\label{prop:cov angular ps}
At a given frequency bin $\fb$, and time segment $s$,
\begin{equation}\label{cov:A_l}
\begin{aligned}
    &\operatorname{Cov}(\hat{A}_{\fb,s,\ell},\hat{A}_{\fb,s,\ell'}) \\
    &=\sum_{m,m'}\frac{ |( {\Gamma^{-1}_{\fb,s}})_{\ell m,\ell' m'}|^2     + 2\Rel[a_{\fb,\ell m}^* ( {\Gamma}_{\fb,s}^{-1})_{\ell m,\ell' m'}a_{\fb,\ell' m'}]}{(1+2\ell)(1+2\ell')}.
\end{aligned}
\end{equation}
\end{proposition}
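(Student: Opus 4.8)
The plan is to compute the covariance directly from the complex-Gaussian moment structure of the $\hat a_{\fb,s,\ell m}$, exploiting that the estimator in Eq.~\eqref{eq:C_ell} is a bias-corrected quadratic form. First I would write $\hat a_{\fb,s,\ell m}=a_{\fb,\ell m}+\epsilon_{\ell m}$, where $\epsilon_{\ell m}$ is a zero-mean complex Gaussian with $\langle \epsilon_{\ell m}\epsilon_{\ell' m'}^*\rangle=(\Gamma^{-1}_{\fb,s})_{\ell m,\ell' m'}$. The key structural fact I would invoke is that this complex Gaussian is \emph{proper} (circularly symmetric), so that the pseudo-covariance vanishes, $\langle \epsilon_{\ell m}\epsilon_{\ell' m'}\rangle=0$; this is exactly what is encoded in the block form of $\boldsymbol{\Omega}$ derived in Appendix~\ref{generalized_chi}. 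Since $\langle|\epsilon_{\ell m}|^2\rangle=(\Gamma^{-1}_{\fb,s})_{\ell m,\ell m}$, the subtracted term in Eq.~\eqref{eq:C_ell} removes exactly the mean of $|\hat a_{\fb,s,\ell m}|^2$ down to $|a_{\fb,\ell m}|^2$, and I would introduce the centered summand $\delta_{\ell m}=a_{\fb,\ell m}\epsilon_{\ell m}^*+a_{\fb,\ell m}^*\epsilon_{\ell m}+\bigl(|\epsilon_{\ell m}|^2-\langle|\epsilon_{\ell m}|^2\rangle\bigr)$, so that $\hat A_{\fb,s,\ell}-\langle \hat A_{\fb,s,\ell}\rangle=(1+2\ell)^{-1}\sum_m\delta_{\ell m}$.

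By bilinearity of the covariance, the problem reduces to evaluating $\langle \delta_{\ell m}\delta_{\ell' m'}\rangle$ and summing over $m,m'$ with the prefactor $[(1+2\ell)(1+2\ell')]^{-1}$. Expanding the product splits it into linear$\times$linear, linear$\times$quadratic, and quadratic$\times$quadratic pieces. The linear$\times$quadratic pieces are third-order moments of zero-mean jointly Gaussian variables and therefore vanish. For the linear$\times$linear piece I would use the second moments together with propriety: the terms built from $\langle\epsilon_{\ell m}\epsilon_{\ell' m'}\rangle$ and $\langle\epsilon_{\ell m}^*\epsilon_{\ell' m'}^*\rangle$ drop out, and the two surviving terms are complex conjugates of one another, leaving $2\Rel[a_{\fb,\ell m}^*(\Gamma^{-1}_{\fb,s})_{\ell m,\ell' m'}a_{\fb,\ell' m'}]$, which is the second term of Eq.~\eqref{cov:A_l}.

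For the quadratic$\times$quadratic piece, which equals $\operatorname{Cov}(|\epsilon_{\ell m}|^2,|\epsilon_{\ell' m'}|^2)$, I would apply the complex Wick (Isserlis) theorem to the fourth moment $\langle\epsilon_{\ell m}\epsilon_{\ell m}^*\epsilon_{\ell' m'}\epsilon_{\ell' m'}^*\rangle$. Of the three Gaussian pairings, the one pairing the two unconjugated with the two conjugated factors across indices vanishes by propriety, the pairing that reproduces $\langle|\epsilon_{\ell m}|^2\rangle\langle|\epsilon_{\ell' m'}|^2\rangle$ cancels against the product of means, and the remaining cross pairing contributes $\langle\epsilon_{\ell m}\epsilon_{\ell' m'}^*\rangle\langle\epsilon_{\ell m}^*\epsilon_{\ell' m'}\rangle=(\Gamma^{-1}_{\fb,s})_{\ell m,\ell' m'}\,\overline{(\Gamma^{-1}_{\fb,s})_{\ell m,\ell' m'}}=|(\Gamma^{-1}_{\fb,s})_{\ell m,\ell' m'}|^2$, the first term of Eq.~\eqref{cov:A_l}. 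Collecting the two surviving contributions and inserting the normalization gives the claimed expression.

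The hardest part will be handling the fourth-order moments cleanly and justifying propriety. A self-contained alternative that avoids the complex Wick theorem is to pass to the real Gaussian vector $\boldsymbol{X}\sim\mathcal{N}(\boldsymbol{\mu},\boldsymbol{\Omega})$ of Appendix~\ref{generalized_chi} and apply the real Isserlis theorem to $|\hat a_{\fb,s,\ell m}|^2$ written in real coordinates, at the cost of heavier index bookkeeping. Either route requires only that the real and imaginary parts of the $\hat a_{\fb,s,\ell m}$ be jointly Gaussian and that the pseudo-covariance vanish, with no further distributional assumptions.
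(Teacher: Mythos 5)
Your proof is correct, but it follows a genuinely different route from the paper's. You stay entirely in complex coordinates: after centering $\hat a_{\fb,s,\ell m}=a_{\fb,\ell m}+\epsilon_{\ell m}$, you split the fluctuation of each $|\hat a_{\fb,s,\ell m}|^2$ into linear and quadratic pieces, discard the cross terms as vanishing third moments, and evaluate the linear--linear and quadratic--quadratic contributions using propriety (vanishing pseudo-covariance) together with the complex Wick/Isserlis theorem; your accounting of the three pairings in the fourth moment is exactly right. The paper instead descends to real coordinates: it first proves $\operatorname{Cov}(U^2,V^2)=2\sigma_{12}(2\mu_U\mu_V+\sigma_{12})$ for a real bivariate Gaussian by differentiating the moment generating function (Lemma~\ref{lem:cov bivariate normal}), then writes $|Z_i|^2=X_i^2+Y_i^2$ and assembles $\operatorname{Cov}(|Z_i|^2,|Z_j|^2)$ from four applications of that lemma plus the Hermitian symmetry relations in Eq.~\eqref{eqn:equalities} (Lemma~\ref{lem:cov square}), handling the coincident-index case separately; Proposition~\ref{prop:cov angular ps} then follows by bilinearity, exactly as in your final step. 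Your approach is shorter and treats $(\ell,m)=(\ell',m')$ uniformly (the Wick pairing count is unchanged), but it imports the complex Isserlis theorem for proper Gaussians as an external tool; the paper's route is heavier on real/imaginary bookkeeping yet self-contained, needing nothing beyond elementary MGF calculus. Note also that the propriety you invoke is not an additional assumption: it is precisely the block structure $\tfrac{1}{2}\bigl(\begin{smallmatrix}\Rel(\boldsymbol{\Xi}) & -\Img(\boldsymbol{\Xi})\\ \Img(\boldsymbol{\Xi}) & \Rel(\boldsymbol{\Xi})\end{smallmatrix}\bigr)$ of Eq.~\eqref{eq:CN to MVN}, so your appeal to the paper's definition of $\mathcal{CN}$ is legitimate, though in a full write-up you should verify in one line that this block form forces $\langle\epsilon_{\ell m}\epsilon_{\ell' m'}\rangle=0$.
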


The proof of Proposition~\ref{prop:cov angular ps} is deferred until two preliminary lemmas have been established.

\begin{lemma}
\label{lem:cov bivariate normal}
If the random variables $U$ and $V$ satisfy
\begin{equation}
    \begin{pmatrix}
U \\
V
\end{pmatrix}\sim \mathcal{N}\left( \begin{pmatrix}
{\mu}_U \\
{\mu}_V
\end{pmatrix}, \begin{pmatrix}
\sigma_{11} & \sigma_{12} \\
\sigma_{12} & \sigma_{22}
\end{pmatrix} \right),
\end{equation}
then
\begin{equation}
\label{equ:UV}
    {\rm Cov} (U^2 ,V^2) = 2\sigma_{12}(2\mu_U\mu_V+\sigma_{12}). 
\end{equation}  
\end{lemma}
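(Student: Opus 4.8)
The plan is to reduce to the mean-zero case and compute the relevant Gaussian moments directly. First I would write $U = \mu_U + X$ and $V = \mu_V + Y$, where $(X,Y)$ is a mean-zero bivariate normal vector sharing the covariance matrix of $(U,V)$, so that $\mathbb{E}[X^2]=\sigma_{11}$, $\mathbb{E}[Y^2]=\sigma_{22}$, and $\mathbb{E}[XY]=\sigma_{12}$. The quantity of interest is $\operatorname{Cov}(U^2,V^2)=\mathbb{E}[U^2V^2]-\mathbb{E}[U^2]\,\mathbb{E}[V^2]$, and the two marginal second moments are immediate: $\mathbb{E}[U^2]=\mu_U^2+\sigma_{11}$ and $\mathbb{E}[V^2]=\mu_V^2+\sigma_{22}$.

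The main computation is $\mathbb{E}[U^2V^2]$. Substituting the shifts and expanding gives $U^2V^2=(\mu_U^2+2\mu_U X+X^2)(\mu_V^2+2\mu_V Y+Y^2)$, a sum of nine terms. Taking expectations, every term containing an odd total power of $(X,Y)$ vanishes because odd moments of a mean-zero Gaussian are zero; this kills all the terms linear in $X$ or $Y$ as well as the mixed cubic terms $\mathbb{E}[XY^2]$ and $\mathbb{E}[X^2Y]$. The surviving contributions are $\mu_U^2\mu_V^2$, $\mu_U^2\,\mathbb{E}[Y^2]$, $\mu_V^2\,\mathbb{E}[X^2]$, $4\mu_U\mu_V\,\mathbb{E}[XY]$, and $\mathbb{E}[X^2Y^2]$.

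The one genuinely nontrivial ingredient -- and the step I would treat most carefully -- is the fourth moment $\mathbb{E}[X^2Y^2]$. I would evaluate it with Isserlis' (Wick's) theorem for mean-zero jointly Gaussian variables, which yields $\mathbb{E}[X^2Y^2]=\mathbb{E}[X^2]\,\mathbb{E}[Y^2]+2\,\mathbb{E}[XY]^2=\sigma_{11}\sigma_{22}+2\sigma_{12}^2$. Alternatively one can obtain this by differentiating the joint moment generating function, or by conditioning $Y$ on $X$ and using the conditional mean and variance of the Gaussian.

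Assembling these pieces gives $\mathbb{E}[U^2V^2]=\mu_U^2\mu_V^2+\mu_U^2\sigma_{22}+\mu_V^2\sigma_{11}+4\mu_U\mu_V\sigma_{12}+\sigma_{11}\sigma_{22}+2\sigma_{12}^2$. Subtracting $\mathbb{E}[U^2]\,\mathbb{E}[V^2]=(\mu_U^2+\sigma_{11})(\mu_V^2+\sigma_{22})$ cancels the terms $\mu_U^2\mu_V^2$, $\mu_U^2\sigma_{22}$, $\mu_V^2\sigma_{11}$, and $\sigma_{11}\sigma_{22}$, leaving $4\mu_U\mu_V\sigma_{12}+2\sigma_{12}^2=2\sigma_{12}(2\mu_U\mu_V+\sigma_{12})$, which is exactly Eq.~\eqref{equ:UV}. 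The only real obstacle is the bookkeeping in the nine-term expansion combined with the correct value of $\mathbb{E}[X^2Y^2]$; everything else is routine algebra.
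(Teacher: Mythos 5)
Your proof is correct, and every step checks out: the centering $U=\mu_U+X$, $V=\mu_V+Y$, the vanishing of all odd-order terms in the nine-term expansion, the Isserlis identity $\langle X^2Y^2\rangle=\sigma_{11}\sigma_{22}+2\sigma_{12}^2$, and the final cancellation all hold, yielding exactly $2\sigma_{12}(2\mu_U\mu_V+\sigma_{12})$. The route differs from the paper's in how the mixed fourth moment is obtained: the paper works with the uncentered pair $(U,V)$ directly and extracts $\langle U^2V^2\rangle$ as the fourth partial derivative $\frac{\partial^4}{\partial a^2\partial b^2}M_{U,V}(a,b)$ of the joint moment generating function evaluated at the origin, whereas you reduce to the mean-zero case and invoke Wick/Isserlis. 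The two intermediate expressions for $\langle U^2V^2\rangle$ coincide term by term, so the endgame is identical. What each buys: the MGF computation avoids the combinatorial bookkeeping of the nine-term expansion and handles the noncentral case in one stroke, but the fourth-order differentiation is itself tedious and opaque; your decomposition isolates the single nontrivial Gaussian fact (the mean-zero fourth-moment identity) and makes transparent why the means enter only through the cross term $4\mu_U\mu_V\sigma_{12}$ -- namely, via the pairing of the linear parts $2\mu_U X$ and $2\mu_V Y$. Your version is arguably the more instructive of the two; either is acceptable.
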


\begin{proof}
Notice that ${\rm Cov}(U^2, V^2) = \langle U^2 V^2 \rangle - \langle U^2\rangle \langle V^2 \rangle$ and
\begin{equation}
\label{eqn:Lemma1_cov3}  
    \langle U^2\rangle=\mu_U^2+\sigma_{11}, \text{ and } \langle V^2 \rangle =\mu_V^2+\sigma_{22}.
\end{equation}
All that is left is to calculate $\langle U^2 V^2 \rangle$.  The moment generating function of $(U,V)$, denoted as $M_{U,V}(a,b)$, is
\begin{equation}
\begin{split}
    M_{U,V}(a,b)&=\langle{e^{aU+bV}\rangle}\\
    &=\exp \left(\mu_U a+\mu_V b+\frac{1}{2}a^2\sigma_{11}+ab\sigma_{12}+\frac{1}{2}b^2\sigma_{22}\right)
\end{split}
\end{equation}
and direct calculation yields
\begin{equation}
\begin{split}
\label{eqn:Lemma1_cov2}    &\langle U^2V^2 \rangle =\frac{\partial^4}{\partial a^2 \partial b^2 } M_{U,V}(0,0)\\
    =&\mu_U^2 \mu_V^2 + \mu_V^2 \sigma_{11} + 4 \mu_U \mu_V \sigma_{12} + 2 \sigma_{12}^2 + \mu_U^2 \sigma_{22} + \sigma_{11} \sigma_{22}.
\end{split}
\end{equation}
Combining Eqs.~\eqref{eqn:Lemma1_cov3} and~\eqref{eqn:Lemma1_cov2} 
yields the result. 
\end{proof}

Consider a random variable $\mathbf{Z}$ that follows a multivariate complex Gaussian distribution with mean vector $\boldsymbol{\mu}\in \mathbb{C}^n$ 
 and covariance matrix $\boldsymbol{\Xi}$, which is Hermitian and non-negative definite, that is,
\begin{equation}
    \mathbf{Z} = [Z_1, Z_2, \cdots, Z_n]^T \sim \mathcal{CN}(\boldsymbol{\mu}, \boldsymbol{\Xi}).
\end{equation}
The real and imaginary parts of $\mathbf{Z}$ satisfy \citep[Cf.][for details]{Andersen1995}

\begin{equation} \label{eq:CN to MVN}
\begin{pmatrix}
\Rel(\mathbf{Z}) \\
\Img(\mathbf{Z})
\end{pmatrix}\sim \mathcal{N}\left( \begin{pmatrix}
\Rel(\boldsymbol{\mu}) \\
\Img(\boldsymbol{\mu})
\end{pmatrix} , \frac{1}{2}\begin{pmatrix}
\Rel(\boldsymbol{\Xi})& -\Img(\boldsymbol{\Xi}) \\
\Img(\boldsymbol{\Xi}) & \Rel(\boldsymbol{\Xi})
\end{pmatrix}
 \right).
\end{equation}

\begin{lemma} \label{lem:cov square} 
If  \( \mathbf{Z} = [Z_1, Z_2, \cdots, Z_n]^T \sim \mathcal{CN}(\boldsymbol{\mu}, \boldsymbol{\Xi}) \), then, for any indices $i,j=1,\dots,n$, 
\begin{equation}
\label{eqn:cov}
    {\rm Cov}(|{Z}_i|^2 ,|{Z}_j|^2) =|{\Xi}_{ij}|^2 + 2 \Rel( \mu_{Z_i}^*{\Xi}_{ij}\mu_{Z_j}),
\end{equation}
{where $\mu_{Z_i}$ denotes the $i$-th element of $\bm{\mu}$ and ${\Xi}_{ij}$ denotes the $(i,j)$-th element of the covariance matrix $\boldsymbol{\Xi}$.}
\end{lemma}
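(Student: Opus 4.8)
The plan is to reduce the complex identity to the real bivariate computation of Lemma~\ref{lem:cov bivariate normal} by passing through the representation in Eq.~\eqref{eq:CN to MVN}. First I would write each squared modulus as a sum of squares of its real and imaginary parts, $|Z_i|^2 = \Rel(Z_i)^2 + \Img(Z_i)^2$ and $|Z_j|^2 = \Rel(Z_j)^2 + \Img(Z_j)^2$. Bilinearity of covariance then expands ${\rm Cov}(|Z_i|^2, |Z_j|^2)$ into the four terms ${\rm Cov}(\Rel(Z_i)^2, \Rel(Z_j)^2)$, ${\rm Cov}(\Rel(Z_i)^2, \Img(Z_j)^2)$, ${\rm Cov}(\Img(Z_i)^2, \Rel(Z_j)^2)$, and ${\rm Cov}(\Img(Z_i)^2, \Img(Z_j)^2)$.

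Next I would apply Lemma~\ref{lem:cov bivariate normal} to each of these four terms, which is legitimate because any pair formed from the real and imaginary parts of $\mathbf{Z}$ is jointly bivariate normal. The required means and covariances are read directly off Eq.~\eqref{eq:CN to MVN}: the means are $\Rel(\mu_{Z_i})$, $\Img(\mu_{Z_i})$, $\Rel(\mu_{Z_j})$, $\Img(\mu_{Z_j})$, while the pairwise covariances are $\tfrac12 \Rel(\Xi_{ij})$ for the real-real and imaginary-imaginary pairs, $-\tfrac12 \Img(\Xi_{ij})$ for the real-imaginary pair, and $+\tfrac12 \Img(\Xi_{ij})$ for the imaginary-real pair. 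Substituting each of these entries for $\sigma_{12}$ in Lemma~\ref{lem:cov bivariate normal} produces the four contributions explicitly.

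Finally I would collect the resulting terms in two groups. The four purely quadratic-in-$\sigma_{12}$ contributions are $\tfrac12[\Rel(\Xi_{ij})]^2$, $\tfrac12[\Img(\Xi_{ij})]^2$, $\tfrac12[\Img(\Xi_{ij})]^2$, and $\tfrac12[\Rel(\Xi_{ij})]^2$, which add up to $[\Rel(\Xi_{ij})]^2 + [\Img(\Xi_{ij})]^2 = |\Xi_{ij}|^2$. The remaining mean-dependent contributions combine into $2\Rel(\Xi_{ij})\bigl(\Rel(\mu_{Z_i})\Rel(\mu_{Z_j}) + \Img(\mu_{Z_i})\Img(\mu_{Z_j})\bigr) + 2\Img(\Xi_{ij})\bigl(\Img(\mu_{Z_i})\Rel(\mu_{Z_j}) - \Rel(\mu_{Z_i})\Img(\mu_{Z_j})\bigr)$, which I would recognize as exactly $2\Rel(\mu_{Z_i}^* \Xi_{ij}\mu_{Z_j})$ by expanding that complex triple product into real and imaginary parts. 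Adding the two groups gives the claimed identity.

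I expect the main obstacle to be purely the bookkeeping of signs and the factors of $\tfrac12$ across the four applications of Lemma~\ref{lem:cov bivariate normal}, together with the final step of matching the assembled real expression to the compact complex form $2\Rel(\mu_{Z_i}^*\Xi_{ij}\mu_{Z_j})$; nothing conceptually deep remains once the bivariate reduction is in place. A minor consistency check I would note is the diagonal case $i=j$, where the Hermitian property forces $\Img(\Xi_{ii}) = 0$, so that $\Rel(Z_i)$ and $\Img(Z_i)$ are uncorrelated and the computation specializes correctly.
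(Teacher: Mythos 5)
Your proposal is correct and follows essentially the same route as the paper's proof: write $|Z_i|^2$ and $|Z_j|^2$ as sums of squared real and imaginary parts, expand the covariance bilinearly into four terms, apply Lemma~\ref{lem:cov bivariate normal} with the means and covariances read off Eq.~\eqref{eq:CN to MVN}, and recombine the result into the complex form $|\Xi_{ij}|^2 + 2\Rel(\mu_{Z_i}^*\Xi_{ij}\mu_{Z_j})$. The only minor difference is that the paper treats the diagonal case $i=j$ by a separate direct variance computation (using the independence of $\Rel(Z_i)$ and $\Img(Z_i)$), whereas you subsume it in the same four-term expansion; this is legitimate because Lemma~\ref{lem:cov bivariate normal}, proved via the moment generating function, remains valid in the degenerate case $U=V$.
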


\begin{proof}

Without loss of generality consider $Z_1$ and $Z_2$.  Let $Z_1 = X_1 + i Y_1$ and $Z_2 = X_2 + i Y_2$, where $X_1$, $X_2$, $Y_1$, and $Y_2$ are real-valued Gaussian random variables. From Eq.~\eqref{eq:CN to MVN}, if $j,k \in \{1,2\}$, then $(X_j, Y_k)^T$ follows a bivariate Gaussian distribution. Let the means of $Z_j$, $X_j$, and $Y_j$ be denoted by $\mu_{Z_j}$, $\mu_{X_j}$, and $\mu_{Y_j}$, respectively.  Then $\mu_{Z_j} = \mu_{X_j} + i\mu_{Y_j}$. If $\sigma_{X_j,Y_k}$ represents the covariance between random variables $X_j$ and $Y_k$, then, using Lemma~\ref{lem:cov bivariate normal}, obtain
\begin{equation}
\begin{split}
\label{eqn:expansion}
    &{\rm Cov}(|{Z}_1|^2 ,|{Z}_2|^2)\\
    = & {\rm Cov}({X}_1^2 ,{X}_2^2) + {\rm Cov}({X}_1^2 ,{Y}_2^2) +{\rm Cov}({Y}_1^2 ,{X}_2^2) + {\rm Cov}({Y}_1^2 ,{Y}_2^2)\\
= & 2\sigma_{X_1,X_2}(2\mu_{X_1}\mu_{X_2} + \sigma_{X_1,X_2}) +2\sigma_{X_1,Y_2}(2\mu_{X_1}\mu_{Y_2} +\sigma_{X_1,Y_2})\\
& + 2\sigma_{Y_1,X_2}(2\mu_{Y_1}\mu_{X_2} + \sigma_{Y_1,X_2})+2\sigma_{Y_1,Y_2}(2\mu_{Y_1}\mu_{Y_2} +\sigma_{Y_1,Y_2}).
\end{split}
\end{equation}
If $\boldsymbol{\Xi}'$ is the covariance matrix between $Z_1$ and $Z_2$, that is, 
\begin{equation}
\begin{split}
    \boldsymbol{\Xi}' 
&= \begin{pmatrix} 
\Xi_{11} & \Xi_{12} \\
\Xi_{21} & \Xi_{22}
\end{pmatrix},
\end{split}
\end{equation}
then
\begin{equation}
    \quad \Rel(\boldsymbol{\Xi}') = \begin{pmatrix} 
\Xi_{11} & \Rel(\Xi_{12}) \\
\Rel(\Xi_{21})& \Xi_{22}\end{pmatrix},
\end{equation}
and
\begin{equation}
    \Img(\boldsymbol{\Xi}') = \begin{pmatrix} 
0 & -\Img(\Xi_{12}) \\
\Img(\Xi_{21})& 0\end{pmatrix}.
\end{equation}
From Eq.~\eqref{eq:CN to MVN}, the distribution of $(X_1,X_2, Y_1,Y_2)$ is 
\begin{equation} 
\label{eqn:cov1}
\begin{pmatrix}
X_1 \\
X_2 \\
Y_1 \\
Y_2
\end{pmatrix}\sim \mathcal{N}\left( \begin{pmatrix}
\mu_{X_1} \\
\mu_{X_2} \\
\mu_{Y_1}\\
\mu_{Y_2}\\
\end{pmatrix},   \boldsymbol{\Omega} \right),
\end{equation}
where
\begin{equation} \label{Omega}
     \boldsymbol{\Omega}=\frac{1}{2}
\begin{pmatrix}
\begin{matrix}
\Xi_{11} & \Rel(\Xi_{12}) \\
\Rel(\Xi_{21}) & \Xi_{22}  
\end{matrix} & \begin{matrix}
0 & -\Img(\Xi_{12}) \\
-\Img(\Xi_{21}) & 0
\end{matrix} \\
\begin{matrix}
0 & \Img(\Xi_{12}) \\
\Img(\Xi_{21}) & 0
\end{matrix}
& 
\begin{matrix}
\Xi_{11} & \Rel(\Xi_{12}) \\
\Rel(\Xi_{21}) & \Xi_{22}  
\end{matrix}
\end{pmatrix}. 
\end{equation}
Furthermore, since the covariance matrix $\boldsymbol{\Xi}$ is Hermitian, 
$\Rel(\boldsymbol{\Xi})=\Rel(\boldsymbol{\Xi})^T$, $\Img(\boldsymbol{\Xi})=-\Img(\boldsymbol{\Xi})^T$. Thus, \begin{equation}
\begin{split}
\label{eqn:equalities}
\sigma_{X_1,X_2}&=\sigma_{Y_1,Y_2}=\frac{1}{2}\Rel(\Xi_{12})=\frac{1}{2}\Rel(\Xi_{21}),\\
\sigma_{X_1,Y_2}&=-\frac{1}{2}\Img(\Xi_{12})=\frac{1}{2}\Img(\Xi_{21}),~~~\text{and}\\
\sigma_{Y_1,X_2}&=\frac{1}{2}\Img(\Xi_{12})=-\frac{1}{2}\Img(\Xi_{21}).
\end{split}
\end{equation}
By combining Eqs.~\eqref{eqn:expansion} and~\eqref{eqn:equalities} obtain
\begin{equation} 
\label{res}
\begin{split}
 &{\rm Cov}(|{Z}_1|^2 ,|{Z}_2|^2)\\
    =&(\Rel(\Xi_{21}))^2 +(\Img(\Xi_{21}))^2 + 2(\mu_{X_1}\mu_{Y_2} - \mu_{X_2}\mu_{Y_1}) \Img(\Xi_{21})\\
    & + 2(\mu_{X_1}\mu_{X_2}+\mu_{Y_1}\mu_{Y_2}) \Rel(\Xi_{21})\\
    =&|\Xi_{21}|^2 + 2 \Rel(\Xi_{21}) \Rel (\mu^*_{Z_1}\mu_{Z_2}) +2 \Img(\Xi_{21}) \Img (\mu^*_{Z_1}\mu_{Z_2})\\
    =&|\Xi_{12}|^2 + 2 \Rel( \mu_{Z_1}^*\Xi_{12}\mu_{Z_2}).
\end{split}
\end{equation}
The only remaining case to consider is when the indices coincide. Specifically, 
\begin{equation}
{\rm Cov}(|{Z}_1|^2 ,|{Z}_1|^2) = {\rm Var} (|{Z}_1|^2) = {\rm Var} (X_1^2) + {\rm Var}(Y_1^2). 
\end{equation}
The last equality follows since, by Eqs.~\eqref{eqn:cov1} and~\eqref{Omega}, the random variables $X_1$ and $Y_1$ are independent.   Also, note that from Eq.~\eqref{Omega}, ${\rm Var}(X_1) = {\rm Var}(Y_1) = \frac{1}{2} \Xi_{11}$. Using standard results about the moments of Gaussian distributions, direct calculation yields
\begin{equation}
\begin{split}
 {\rm Var}(X_1^2) = &\inner{X_1^4} - \inner{X_1^2}^2\\
    =&\frac{3}{4} \Xi_{11}^2 + 3\mu_{X_1}^2 \Xi_{11} + \mu_{X_1}^4 - \left( \frac{1}{2} \Xi_{11} + \mu_{X_1}^2 \right)^2\\
    =& \frac{1}{2} \Xi_{11}^2 + 2 \mu^2_{X_1} \Xi_{11}
    \end{split}
\end{equation}
and, similarly,
\begin{equation}
    {\rm Var}(Y_1^2) = \frac{1}{2} \Xi_{11}^2 + 2 \mu^2_{Y_1} \Xi_{11}.
\end{equation}
It follows that
\begin{equation}
\begin{split}
    {\rm Var}(|Z_1|^2)=& {\Xi}_{11}^2+2 |\mu_{Z_1}|^2  {\Xi}_{11} \\
    =& \Xi_{11}^2+2\Rel(\mu_{Z_1}^*\Xi_{11}  \mu_{Z_1}).
\end{split}
\end{equation}

\end{proof}

We are now in position to prove Proposition~\ref{prop:cov angular ps}.
\begin{proof}[Proof of Proposition~\ref{prop:cov angular ps}]
Since 
\begin{equation}
    \{\hat{a}_{\fb,s,\ell m}\}_{\mathcal{I} } \sim \mathcal{CN} (\{ {a}_{\fb,\ell m}\}_{\mathcal{I} }, \boldsymbol{\Gamma}_{\fb,s}^{-1}),
\end{equation}
where the index set $\mathcal{I}=\{(\ell,m)\in \mathbb{Z}^2;1\leq \ell\leq \ell_{\rm max},-\ell\leq m\leq \ell \}$, and $\boldsymbol{\Gamma}_{\fb,s}^{-1}= \{(\Gamma_{\fb,s}^{-1})_{\ell m,\ell'm'}  \}_{(\ell,m),(\ell',m')\in \mathcal{I} }$, applying Lemma~\ref{lem:cov square} yields

\begin{equation}
\begin{aligned}
& {\rm Cov}(|\hat{a}_{\fb,s,\ell m}|^2,|\hat{a}_{\fb,s,\ell'm'}|^2) \\
    &= |( {\Gamma}_{\fb,s}^{-1})_{\ell m,\ell' m'}|^2     + 2\Rel[a_{\fb,\ell m}^* ( {\Gamma}_{\fb,s}^{-1})_{\ell m,\ell' m'}a_{\fb,\ell' m'} ].
\end{aligned} 
\end{equation}

Recalling the definition of $\hat{A}_{\fb,s,\ell}$ in Eq.~\eqref{eq:C_ell} obtain
\begin{equation}
\begin{aligned}
  & {\rm Cov} (\hat{A}_{\fb,s,\ell},\hat{A}_{\fb,s,\ell'}) \\
& = \frac{1}{(1+2\ell)(1+2\ell')} \sum_{m,m'} {\rm Cov} (|\hat{a}_{\fb,s,\ell m}|^2,|\hat{a}_{\fb,s,\ell'm'}|^2),
\end{aligned}  
\end{equation}
and Eq.~\eqref{cov:A_l} follows.
\end{proof}

\section{Properties of the plug-in estimator of the covariance of angular power spectrum based on Eq. \texorpdfstring{\eqref{eqn:cov_text}}{(ref)}} 
\label{app:d}

Recall that at a given frequency band $\fb$, the unknown covariance function between \( \hat{A}_{\fb,s,\ell} \) and \(\hat{A}_{\fb,s,\ell'}\) is 
\begin{equation}
(\Sigma_{\fb,s} )_{\ell,\ell'} = {\rm Cov} (\hat{A}_{\fb,s,\ell}, \hat{A}_{\fb,s,\ell'}) .
\end{equation}
Assume that \( \{\hat{a}_{\fb,s,\ell m}\}_{\mathcal{I}}, \, s \geq 1 \), is a sequence of independent complex Gaussian random variables, with distribution \( \mathcal{CN}( \{ {a}_{\fb,\ell m}\}_{\mathcal{I}},  \boldsymbol{\Gamma}_{\fb,s}^{-1} ) \), where \( \boldsymbol{\Gamma}_{\fb,s} \) and \( \boldsymbol{\Gamma}^{-1}_{\fb,s} \) are deterministic and inverses of each other. The index set $\mathcal{I}$ is defined similarly as in Appendix~\ref{generalized_chi}.

As described in Section~\ref{sec:III.1}, the elements of the plug-in estimator of \( (\Sigma_{\fb,s} )_{\ell,\ell'} \) are given by
\begin{equation}
\begin{split}
     \sum_{m, m'} \frac{ |( {\Gamma}_{\fb,s}^{-1})_{\ell m, \ell' m'}|^2 + 2 \, \Rel \left[ \hat{a}^*_{\fb,\ell m} (  \Gamma_{\fb,s}^{-1})_{\ell m, \ell' m'} \hat{a}_{\fb,\ell' m'} \right]}{(1+2\ell)(1+2\ell')} .
\end{split}
\end{equation}
Recall from Eq. 
\eqref{equ:a-S}, $\hat{a}_{\fb,\ell m}$ depends on $S$.

Let $\{X_n\}_{n=1}^{\infty}$ be a sequence of random variables defined on the same probability space, and let $X$ be another random variable on this space. Say that $X_n$ converges in probability to $X$ as $n \to \infty$, denoted
\begin{equation}
  X_n \inP X.
\end{equation}


Let $\| \cdot \|_{\rm op}$ denote the operator norm, which is equal to the largest singular value of the matrix.

\begin{theorem} \label{thm5}
The following statements hold.
\begin{enumerate}
    \item Both \( \Sigma_{\fb,s} \) and \( \hat{\Sigma}_{\fb,s} \) are positive definite matrices.
    \item If there exists $\zeta >0$ such that, for all $s \ge 1$ and $\fb$, 
\begin{equation}\label{equ:condD1}
    \|\Gamma^{-1}_{\fb,s}\|_{\rm op}\leq \zeta,
\end{equation}
then $\hat{a}_{\fb,\ell m} \inP a_{\fb,\ell m}$, as $S \to \infty$.
    
\item If $\hat{a}_{\fb,\ell m}\inP {a}_{\fb,\ell m}$, then \( \hat{\Sigma}_{\fb,s} \) is a consistent estimator of \( \Sigma_{\fb,s} \), for any $s$.
    \item If condition~\eqref{equ:condD1} holds, then, as $S \to \infty$, 
\begin{equation}
     \max_{ \fb } \max_{1 \le s \le S}
    \bigl\|\hat{\Sigma}_{\fb,s}^{-1} - \Sigma_{\fb,s}^{-1}\bigr\|_{\rm op}
  \inP 0.
\end{equation}    
\end{enumerate}
\end{theorem}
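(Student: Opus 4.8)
The plan is to prove the uniform convergence of the inverses in two moves: first to establish that $\max_{\fb}\max_{1\le s\le S}\|\hat{\Sigma}_{\fb,s}-\Sigma_{\fb,s}\|_{\rm op}\inP 0$, and then to transfer this to $\hat{\Sigma}^{-1}_{\fb,s}-\Sigma^{-1}_{\fb,s}$ through the resolvent-type identity $\hat{\Sigma}^{-1}-\Sigma^{-1}=\hat{\Sigma}^{-1}(\Sigma-\hat{\Sigma})\Sigma^{-1}$. The structural observation that makes the first move painless is already encoded in Eq.~\eqref{equ:a-S}: the plug-in coefficients $\hat{a}_{\fb,\ell m}$ are computed from all $S$ segments jointly and therefore carry \emph{no} segment index, so the same estimated coefficients populate $\hat{\Sigma}_{\fb,s}$ for every $s$. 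All of the $s$-dependence in the difference $\hat{\Sigma}_{\fb,s}-\Sigma_{\fb,s}$ then resides in the deterministic blocks $(\Gamma^{-1}_{\fb,s})_{\ell m,\ell'm'}$, which condition~\eqref{equ:condD1} caps uniformly by $\zeta$.

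First I would exploit the cancellation of the common noise term. Since $|(\Gamma^{-1}_{\fb,s})_{\ell m,\ell'm'}|^2$ appears identically in $\Sigma_{\fb,s}$ and $\hat{\Sigma}_{\fb,s}$ (Eq.~\eqref{eqn:cov_text}), it drops out of the difference, whose $(\ell,\ell')$ entry equals
\[
\frac{2}{(1+2\ell)(1+2\ell')}\sum_{m,m'}\Rel\bigl[(\Gamma^{-1}_{\fb,s})_{\ell m,\ell'm'}\bigl(\hat{a}^*_{\fb,\ell m}\hat{a}_{\fb,\ell'm'}-a^*_{\fb,\ell m}a_{\fb,\ell'm'}\bigr)\bigr].
\]
Bounding $|(\Gamma^{-1}_{\fb,s})_{\ell m,\ell'm'}|\le\zeta$ via Eq.~\eqref{equ:condD1} shows each entry is dominated in modulus by $\tfrac{2\zeta}{(1+2\ell)(1+2\ell')}\sum_{m,m'}|\hat{a}^*_{\fb,\ell m}\hat{a}_{\fb,\ell'm'}-a^*_{\fb,\ell m}a_{\fb,\ell'm'}|$, a bound that does not involve $s$ at all; hence the inner maximum over $1\le s\le S$ is free. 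Bounding the operator norm by a fixed multiple of the largest entry, the claim reduces to showing this $s$-free dominating quantity tends to $0$ in probability, which I would get from statement~2 ($\hat{a}_{\fb,\ell m}\inP a_{\fb,\ell m}$) and the continuous mapping theorem, using that the relevant map vanishes at $\hat{a}=a$. Because there are only finitely many modes $(\ell,\ell')$ and finitely many bins $\fb$, the remaining maxima preserve convergence in probability.

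The main obstacle will be the second move, because the perturbation identity only helps if the inverse factors $\|\hat{\Sigma}^{-1}_{\fb,s}\|_{\rm op}$ and $\|\Sigma^{-1}_{\fb,s}\|_{\rm op}$ are controlled \emph{uniformly} over the growing range $1\le s\le S$. This requires a uniform spectral floor $c:=\inf_{\fb,\,s\ge1}\lambda_{\min}(\Sigma_{\fb,s})>0$, which statement~1 does not supply, since it only gives strict positivity for each fixed $s$; moreover condition~\eqref{equ:condD1}, being merely an upper bound on $\Gamma^{-1}_{\fb,s}$, does not by itself prevent $\lambda_{\min}(\Sigma_{\fb,s})$ from approaching $0$ along a subsequence of segments. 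To secure such a $c$ I would decompose $\Sigma_{\fb,s}$ into its noise block $Q_{\fb,s}$ (the $|\Gamma^{-1}|^2$ part) and its signal part, verify that the signal part is positive semidefinite (it reduces to $2\,w^\dagger\Gamma^{-1}_{\fb,s}w\ge0$ on a suitably reweighted vector $w$), so that $\lambda_{\min}(\Sigma_{\fb,s})\ge\lambda_{\min}(Q_{\fb,s})$, and then bound $\lambda_{\min}(Q_{\fb,s})$ away from $0$ using a uniform \emph{lower} bound on $\Gamma^{-1}_{\fb,s}$ -- equivalently a uniform upper bound on the Fisher matrices $\Gamma_{\fb,s}$ -- which holds because the detector power spectral densities and overlap reduction functions are bounded on each fixed-length segment. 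Granting $c>0$, set $\Delta_S=\max_{\fb}\max_{1\le s\le S}\|\hat{\Sigma}_{\fb,s}-\Sigma_{\fb,s}\|_{\rm op}$; Weyl's inequality gives $\lambda_{\min}(\hat{\Sigma}_{\fb,s})\ge c-\Delta_S$, so on the event $\{\Delta_S<c/2\}$ -- whose probability tends to $1$ by the first move -- one has $\|\Sigma^{-1}_{\fb,s}\|_{\rm op}\le 1/c$ and $\|\hat{\Sigma}^{-1}_{\fb,s}\|_{\rm op}\le 2/c$ simultaneously for all $\fb$ and all $1\le s\le S$. Substituting these into the identity yields $\max_{\fb}\max_{1\le s\le S}\|\hat{\Sigma}^{-1}_{\fb,s}-\Sigma^{-1}_{\fb,s}\|_{\rm op}\le 2\Delta_S/c^2$ on that event, and since $\Delta_S\inP0$ the conclusion follows.
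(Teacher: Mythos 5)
Your argument for statement 4 follows essentially the same route as the paper's: the $|(\Gamma^{-1}_{\fb,s})_{\ell m,\ell' m'}|^2$ terms cancel in $\hat{\Sigma}_{\fb,s}-\Sigma_{\fb,s}$, the surviving entries are bounded by a constant times $\zeta\sum_{m,m'}|\hat{a}^*_{\fb,\ell m}\hat{a}_{\fb,\ell' m'}-a^*_{\fb,\ell m}a_{\fb,\ell' m'}|$, an $s$-free quantity that vanishes in probability, and the passage to inverses uses the identity $\hat{\Sigma}^{-1}_{\fb,s}-\Sigma^{-1}_{\fb,s}=-\hat{\Sigma}^{-1}_{\fb,s}(\hat{\Sigma}_{\fb,s}-\Sigma_{\fb,s})\Sigma^{-1}_{\fb,s}$ together with a uniform spectral floor. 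Where you depart from the paper is the floor itself, and your diagnosis is sound: the paper simply asserts that Eq.~\eqref{equ:smallest_singular_value} yields a $\beta>0$ with $\inf_{s\geq 1,\fb}\lambda_{\min}(\Sigma_{\fb,s})\geq\beta$ (and likewise for $\hat{\Sigma}_{\fb,s}$), but that display only gives strict positivity at each fixed $(\fb,s)$; quantitatively its lower bound scales like $\sigma_{\min}(\Gamma^{-1}_{\fb,s})^2=\|\Gamma_{\fb,s}\|_{\rm op}^{-2}$, so uniformity over $s$ requires a uniform \emph{upper} bound on the Fisher matrices, which condition~\eqref{equ:condD1} (an upper bound on $\Gamma^{-1}_{\fb,s}$) does not supply. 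Indeed, if $\Gamma^{-1}_{\fb,s}\to 0$ along a subsequence of segments, both $\Sigma^{-1}_{\fb,s}$ and $\hat{\Sigma}^{-1}_{\fb,s}$ blow up and the claimed uniform convergence can fail, so the extra assumption you introduce (or something equivalent) is genuinely needed; your proposal makes explicit what the paper's proof leaves implicit, and your Weyl-inequality control of $\lambda_{\min}(\hat{\Sigma}_{\fb,s})$ on the event $\{\Delta_S<c/2\}$ is tidier than asserting a deterministic floor for the random matrices $\hat{\Sigma}_{\fb,s}$.

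The shortfall is coverage: the theorem has four parts, and you prove only the fourth, consuming statements 1 and 2 as inputs. These are short but cannot be omitted, especially statement 2, which your first move invokes directly. For the record, the paper proves statement 1 via the quadratic-form computation of Eq.~\eqref{equ:smallest_singular_value}, splitting $\Sigma_{\fb,s}$ into a ``signal'' part that is a nonnegative quadratic form in the positive definite matrix $\Gamma^{-1}_{\fb,s}$ (the decomposition you sketch inside your floor discussion) and a strictly positive ``noise'' trace term; statement 2 by noting that $\Gamma_{\fb}=\sum_{s=1}^S \Gamma_{\fb,s}$ has smallest eigenvalue at least $S/\zeta$ under~\eqref{equ:condD1}, so that the covariance $\Gamma^{-1}_{\fb}$ of the estimator in Eq.~\eqref{equ:a-S} tends to zero as $S\to\infty$; and statement 3 by applying the continuous mapping theorem to the entries of $\hat{\Sigma}_{\fb,s}$, viewed as continuous functions of the $\hat{a}_{\fb,\ell m}$ with $\Gamma^{-1}_{\fb,s}$ held fixed. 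Supplying these three short arguments would make your proof complete and, on the contested uniform-floor point, more rigorous than the paper's own.
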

\begin{proof}

\underline{Proof of statement 1:}
This proof will be the same for all frequency bins $\fb$ and segments $s$ considered. Therefore, for notational simplicity, let us denote  $\Sigma=\Sigma_{\fb,s}$, $\boldsymbol{\Gamma}=\Gamma_{\fb,s}$ and $a_{\ell m}=a_{\fb,\ell m}$. 

Let $d_1,\cdots,d_{\ell_{\max}}$ be complex numbers such that $|d_1|^2+\cdots+|d_{\ell_{\max}}|^2>0$. Let $\boldsymbol{d} = \{d_{\ell}/(2\ell+1)\}_{(\ell, m)\in \mathcal{I}}$. A direct calculation shows that 
\begin{equation}\label{equ:smallest_singular_value}
\begin{split}
\sum_{\ell,\ell'}d^*_{\ell} & (\Sigma  )_{\ell,\ell'} d_{\ell'}
    =\sum_{\ell m  ,\ell'm'}\frac{ d_{\ell}^* }{1+2\ell }  ( {\Gamma}^{-1})_{\ell' m',\ell m}( {\Gamma}^{-1})_{\ell m,\ell' m'}     \frac{ d_{\ell'} }{1+2\ell' } \\
    &+2\sum_{\ell m  ,\ell'm'}\frac{ d_{\ell}^* }{1+2\ell }    \Rel[a_{\ell m}^* ( {\Gamma}^{-1})_{\ell m,\ell' m'}a_{\ell' m'} ]\frac{ d_{\ell'} }{1+2\ell' } \\
    =&\operatorname{tr}\Big( (\boldsymbol{\Gamma}^{-1} \operatorname{diag}(\boldsymbol{d}))^*\boldsymbol{\Gamma}^{-1} \operatorname{diag}(\boldsymbol{d}) \Big) \\
    &+2\Rel \left( \sum_{\ell m  ,\ell'm'}  (\frac{ d_{\ell'} }{1+2\ell' } a_{\ell m})^* \big( {\Gamma}^{-1}\big)_{\ell m,\ell' m'}\big(a_{\ell' m'}   \frac{ d_{\ell} }{1+2\ell } \big) \right)\\
    \geq& \operatorname{tr}\Big( (\boldsymbol{\Gamma}^{-1} \operatorname{diag}(\boldsymbol{d}))^*\boldsymbol{\Gamma}^{-1} \operatorname{diag}(\boldsymbol{d}) \Big)>0.
\end{split}
\end{equation}
Thus, \( \Sigma \) is positive definite. The proof for \( \hat{\Sigma}_{\fb,s} \) is similar.

\noindent\underline{Proof of statement 2:} 
Recall that, as described in Section~\ref{sec:III}, $\Gamma_{\fb} = \sum_{s=1}^S \Gamma_{\fb,s}$. 
Hence, the smallest eigenvalue of $\Gamma_{\fb}$ is bounded below by $S / \zeta$. As $S \to \infty$, this lower bound goes to infinity; thus,
\begin{equation}
      \|\Gamma_{\fb}^{-1}\|_{\rm op} \;\to\; 0.
\end{equation}
Since 
\begin{equation}
      \{\hat{a}_{\fb,\ell m}\}_{\mathcal{I}}
  \;\sim\;
  \mathcal{CN}\bigl(\{a_{\fb,\ell m}\}_{\mathcal{I}}, \,\Gamma_{\fb}^{-1}\bigr),
\end{equation}
${\rm Cov} (\{\hat{a}_{\fb,\ell m}\}_{\mathcal{I}})=\Gamma_{\fb}^{-1}\to 0$ as $S\to \infty$, which implies that 
\begin{equation}
  \hat{a}_{\fb,\ell m} \;\inP\; a_{\fb,\ell m}
  \quad\text{as } S \to \infty.
\end{equation}

\noindent\underline{Proof of statement 3:} Define the continuous function $g_{\ell,\ell'}$ as 
{\small
\begin{equation}
\begin{split}
    g_{\ell,\ell'}(\{ \hat{a}_{f,\ell m} \}_{\mathcal{I}}, \boldsymbol{\Omega} )
    =\sum_{m  ,m'}\frac{   | {\Omega}_{\ell m,\ell' m'}|^2     + 2\Rel[\hat{a}_{f,\ell m}^*  {\Omega}_{\ell m,\ell' m'}\hat{a}_{f,\ell' m'} ]}{(1+2\ell)(1+2\ell')}.
\end{split}
\end{equation}
}



By the continuous mapping theorem \citep[Cf.][Ch. 2]{van2000asymptotic}, we obtain that as $\hat{a}_{\fb,\ell m} \inP {a}_{\fb,\ell m}$
\begin{equation}
    g_{\ell,\ell'}(\{ \hat{a}_{\fb,\ell m} \}_{\mathcal{I}},\Gamma^{-1}_{\fb,s}) \inP  g_{\ell,\ell'}(\{ { a}_{\fb,\ell m} \}_{\mathcal{I}},\Gamma^{-1}_{\fb,s}).
\end{equation}
Hence, \( \hat{\Sigma}_{\fb,s} \) is a consistent estimator of \( \Sigma_{\fb,s} \), as $\hat{a}_{\fb,\ell m}\inP {a}_{\fb,\ell m}$.


\noindent\underline{Proof of statement 4:}
Eq. \eqref{equ:smallest_singular_value} implies that there exists $\beta>0$ such that
\begin{equation}
\inf_{s\geq 1,\fb}\lambda_{min}(\Sigma_{\fb,s})\geq \beta,\text{ and }\inf_{s\geq 1,\fb}\lambda_{min}(\hat{\Sigma}_{\fb,s}) \geq \beta,
\end{equation}
where $\lambda_{min}(\Sigma_{\fb,s})$ denotes the smallest eigenvalue of $\Sigma_{\fb,s}$.
Since
\begin{equation}
\hat{\Sigma}^{-1}_{\fb,s}-{\Sigma}^{-1}_{\fb,s}= -\hat{\Sigma}^{-1}_{\fb,s}(\hat{\Sigma}_{\fb,s}-{\Sigma}_{\fb,s}){\Sigma}_{\fb,s}^{-1},
\end{equation}
we have
\begin{equation}
\begin{split}
    \|\hat{\Sigma}^{-1}_{\fb,s}-{\Sigma}^{-1}_{\fb,s}\|_{\rm op}&\leq \|\hat{\Sigma}_{\fb,s}^{-1}\|_{\rm op} \|\hat{\Sigma}_{\fb,s}-{\Sigma}_{\fb,s}\|_{\rm op} \|{\Sigma}_{\fb,s}^{-1}\|_{\rm op}\\
    &\leq \frac{1}{\beta^2}\|\hat{\Sigma}_{\fb,s}-{\Sigma}_{\fb,s}\|_{\rm op}.
\end{split}
\end{equation}
Note that there exists a positive constant $C$ such that
\begin{equation}
\begin{split}
    &|(\hat{\Sigma}_{\fb,s})_{\ell, \ell'}-({\Sigma}_{\fb,s})_{\ell, \ell'}|\\
    \leq& \sum_{m, m'} \frac{  2 \,   \left| ({a}^*_{\fb,\ell m}{a}_{\fb,\ell' m'} -\hat{a}^*_{\fb,\ell m}\hat{a}_{\fb,\ell' m'})  (  \Gamma_{\fb,s}^{-1})_{\ell m, \ell' m'}  \right|}{(1+2\ell)(1+2\ell')}\\
    \leq & {  C}\cdot \|\Gamma_{\fb,s}^{-1}\|_{\rm op} \sum_{m, m'} | {a}^*_{\fb,\ell m}{a}_{\fb,\ell' m'} -\hat{a}^*_{\fb,\ell m}\hat{a}_{\fb,\ell' m'} |\\
    \leq & {  C} \cdot \zeta \sum_{m, m'} | {a}^*_{\fb,\ell m}{a}_{\fb,\ell' m'} -\hat{a}^*_{\fb,\ell m}\hat{a}_{\fb,\ell' m'} |.    
\end{split}
\end{equation}
Since for any $\fb$ and any $|m|\leq \ell\leq \ell_{\rm max}$
\begin{equation}
      \hat{a}_{\fb,\ell m} \;\inP\; a_{\fb,\ell m}
  \quad\text{as } S \to \infty,
\end{equation}
there exists a positive constant $C'$ such that
\begin{equation}
\begin{aligned}
    & \max_{ \fb } \max_{1 \le s \le S}\|\hat{\Sigma}^{-1}_{\fb,s}-{\Sigma}^{-1}_{\fb,s}\|_{\rm op}\\
    \leq& \frac{\zeta\cdot C'}{\beta^2} \max_{ \fb } \sum_{m, m'} | {a}^*_{\fb,\ell m}{a}_{\fb,\ell' m'} -\hat{a}^*_{\fb,\ell m}\hat{a}_{\fb,\ell' m'} |\inP 0
\end{aligned}
\end{equation}
as $S\to \infty$.

\end{proof}

 

\section{Properties of the operator in Eq. \texorpdfstring{\eqref{Umu1r1}}{(Eq.~Umu1r1)} }
\label{app:e}
Consider the operator $U_{\bm{a},\bm{b}}$ on $\mathbb{R}^N$ introduced by \citet{khm13} and defined as
\begin{equation}
    \label{U}
    U_{\bm{a},\bm{b}} =\bm{I}-\frac{\langle\bm{a}-\bm{b}, \cdot \rangle}{1-\langle\bm{a},\bm{b}\rangle}(\bm{a}-\bm{b}),
\end{equation}
where $\bm{a}, \bm{b} \in \mathbb{R}^N$ satisfy
\begin{equation}
    ||\bm{a}||=||\bm{b}||=1, \quad \langle{\bm{a},\bm{b}\rangle}\neq 1.
\end{equation}
The operator $U_{\bm{a},\bm{b}}$ maps $\bm{a}$ to $\bm{b}$, maps $\bm{b}$ to $\bm{a}$, and keep the vectors that are orthogonal to both $\bm{a}$ and $\bm{b}$ unchanged, i.e., 
\begin{equation}
\begin{split}
U_{\bm{a},\bm{b}}\bm{a} &= \bm{b}, \  U_{\bm{a},\bm{b}}\bm{b} = \bm{a}, \\ 
\text{and}\quad U_{\bm{a},\bm{b}}\bm{x} &= \bm{x},\quad \text{for} \quad \bm{x} \perp \bm{a},\bm{b}.
\label{eqn:pro1}
\end{split}
\end{equation}

Such an operator $U_{\bm{a},\bm{b}}$ is unitary  -- that is, it is surjective and preserves the inner product. 

The surjectivity can be easily shown from Eq. \eqref{eqn:pro1}. Moreover, 
for any vectors $ \bm{x}, \bm{y} \in \mathbb{R}^N$:
\begin{equation}
\begin{aligned}
&\quad \ \langle U_{\bm{a},\bm{b}}\bm{x}, U_{\bm{a},\bm{b}}\bm{y}\rangle \\
&= \langle\bm{x} -\frac{\langle\bm{a}-\bm{b},\bm{x} \rangle}{1-\langle\bm{a},\bm{b}\rangle}(\bm{a}-\bm{b}), \bm{y} -\frac{\langle\bm{a}-\bm{b},\bm{y} \rangle}{1-\langle\bm{a},\bm{b}\rangle}(\bm{a}-\bm{b})\rangle \\ 
&= \langle \bm{x}, \bm{y} \rangle - \frac{\langle\bm{a}-\bm{b},\bm{x} \rangle \langle\bm{a}-\bm{b},\bm{y} \rangle }{(1-\langle\bm{a},\bm{b}\rangle)^2} \left(||\bm{a}-\bm{b}||^2-2-2\langle \bm{a}, \bm{b}\rangle\right) \\ 
&= \langle \bm{x}, \bm{y} \rangle, 
\end{aligned}
\end{equation}
therefore, $U_{\bm{a},\bm{b}}$ preserves the inner product. Moreover, $U_{\bm{a},\bm{b}}$ is a self-adjoint operator because it satisfies the condition 
\begin{equation}
\begin{aligned}
    \label{eqn:self-adjoint}
    \langle U_{\bm{a},\bm{b}}\bm{x}, \bm{y} \rangle 
&= \langle \bm{x}, \bm{y} \rangle  -  \frac{\langle\bm{a}-\bm{b},\bm{x} \rangle}{1-\langle\bm{a},\bm{b}\rangle}\langle\bm{a}-\bm{b},\bm{y}\rangle \\ 
&=  \langle\bm{x}, \bm{y} -\frac{\langle\bm{a}-\bm{b},\bm{y} \rangle}{1-\langle\bm{a},\bm{b}\rangle}(\bm{a}-\bm{b})\rangle \\
&= \langle \bm{x}, U_{\bm{a},\bm{b}}\bm{y} \rangle. 
\end{aligned}
\end{equation}


\pagebreak

\bibliography{PRD_new/biblio}

\end{document}